\newtheorem{Theorem}{Theorem}
\newtheorem{Remark}{Remark}
\begin{document}
\title{ Capacity Region of  Degraded Relay Broadcast Channel}
\author{
  \IEEEauthorblockN{Ke Wang, Youlong Wu, Yingying Ma\\}
%\thanks{This paper was in part presented at the \emph{IEEE Information Theory Workshop}, in  Jeju, Korea, Oct 2015.}

}
\maketitle

 \begin{abstract}
 %In this paper, we consider the relay broadcast channel (BC) where a transmitter communicates with two receivers with help of a relay. According to the degradation order among the receivers' observed output, three types of   physically  degraded RBCs  have been introduced. Capacity regions were  established for two types of such channels. For the remaining type, we showed that our inner bound is tight for the Gaussian case.
 The  relay broadcast channel (RBC) is considered, in which a transmitter communicates with two receivers with the assistance of a relay. Based on different degradation orders among the relay and the receivers' outputs, three types of  physically degraded RBCs (PDRBCs)  are  introduced. Inner bounds and outer bounds are derived on the capacity region of  the presented three types. The bounds are tight for two types of PDRBCs: 1) one receiver's output is a degraded form of the other receiver's output, and the relay's output is a degraded form of the weaker receiver's output; 2)  one receiver's output is a degraded form of the relay's output, and the other receiver's  output is a degraded form of the relay's output.  For the Gaussian  PDRBC, the bounds match, i.e., establish its capacity region.
 
 % the achievability results are proved to be tight. We provide a proof of converse for the achievable regions as well, which establishes the capacity regions. The result of Gaussian case for each type is proposed to be an example. 

 \end{abstract}

\section{Introduction}
The relay channel \cite{Meulen'74}  describes a 3-node communication channel where the transmitter communicates a message to the receiver with the assistance of a relay.  The capacity of relay channel has been studied in \cite{Cover'79} by Cover and El Gamal, who developed two fundamental relay strategies: compress-forward and decode-forward. It shows that when the relay's output is a degraded form of receiver's output,  letting relay send a constant symbol achieves the channel capacity; when  the receiver's output is a degraded form of relay's output,  decode-forward strategy achieves the channel capacity. The capacity  of general relay channel, unfortunately, has not yet been found. 
%Relay channel is 

In the relay channel, if the relay node also wants to decode a private message sent by the transmitter, then this channel turns to be \emph{partially cooperative} relay broadcast channels (RBCs) \cite{Liang'07Veeravalli}, \cite{Liang'07Kramer}.  The capacity region of partially cooperative RBC is established for the case of degraded message sets \cite{Liang'07Kramer},  where the transmitter has a common message for both destinations and a private message for the relay. The \emph{fully cooperative} RBC \cite{Liang'07Veeravalli} is a more general model where both destinations can serve as relay and receiver.  The fully and cooperatively cooperative RBC with  feedback was  studied in \cite{Liang'07Veeravalli,Wu'ISIT16}. %It shows that the proposed rate regions improve on the known regions that consider either relaying cooperation or feedback communication, but not both. 

%When both receivers serves as both relay and receiver, in which one or both receivers can serve as relay and help the communication. 

% relay broadcast channels (RBCs) describe  communication networks where the transmitter sends information to a set of receivers with the assistance of relaying communication. In   \cite{Liang'07Veeravalli}, \cite{Liang'07Kramer} a cooperative RBC model was studied, where each receiver acts as a relay and sends cooperative information to the other receiver. 

 Another RBC model, called the dedicated RBC model was considered in  \cite{Dabora'06}, \cite{Kramer'05},  where a relay node  assists the cooperation between  two-receiver broadcast channel.   The capacity region of the dedicated RBC is generally unknown, even for the  physically degraded  case. In \cite{Bha'08},  the dedicated Gaussian  RBC is studied and capacity region is established when one receiver's output is a degraded form of the other receiver's output, and the stronger receiver's output is a degraded form of the relay's output. %one receiver's output, denoted by $y_s$, is a degraded form of the relay's output, and the remaining receiver's output is a degraded form of  $y_s$.
 
 In this paper, we consider the 4-node dedicated physically degraded RBC (PDRBC). Based on different degradation orders among the relay and the receivers' outputs, three types of     degraded RBCs  are  introduced. Inner bounds and outer bounds are derived on the capacity region of  presented three types. The bounds are tight when  one receiver's output is a degraded form of the other receiver's output, and the relay's output is a degraded form of either the stronger or the weaker receiver's output. For  Gaussian PDRBC, our bounds always match, i.e., establish the capacity region of Gaussian PDRBC.

%The main contribution of this paper is to establish the capacity regions of  PDRBC. Inner bounds and outer bounds on capacity region are derived for PDRBC. For the  Type-I PDRBC, the proposed inner bound has same expression as the outer bound, but with a stricter input distribution. For the  Type-II and Type-III PDRBC, the inner bounds matches the  outer bounds.  For the Gaussian case, it is shown that  the proposed inner bounds are tight.

{The rest of the paper is organized as follows. System model is introduced in Section \ref{sec:model}. In Section \ref{Sec:Mresults}, we state our main capacity results for PDRBC.  The proof of inner  and the outer bounds on the capacity region of DM-PDRBC  are stated in Section \ref{sec:InType1} and \ref{sec:OutTDM}, respectively. The proof of capacity region for Gaussian PDRBC is presented in Section \ref{Sec:proofGaussian}}.

Notation: We use capital letters to denote  random variables and small letters for their realizations, e.g. $X$ and $x$. Define a function ${C}(x):=\frac{1}{2}\log_2(1+x)$.
 For nonnegative integers $k,j$,  let $X^j_k:= (X_{k,1},\ldots,X_{k,j})$.  Given a distribution $p_A$ over some alphabet $\set{A}$, a positive real number $\epsilon>0$, and a positive integer $n$, $\set{T}_{\epsilon}^{(n)}(p_A)$ is the typical set in \cite{Gamal'book}.   %Given a set $\set{A}\subseteq[1:N]$ and $k\in\set{A}$, let 
%\begin{IEEEeqnarray*}{rCl}
%  \set{A}_{k}:=\set{A}\cap[1:k-1], \set{A}^c_{k}:=\set{A}^c\cap[1:k-1].
%\end{IEEEeqnarray*}

%and $\set{B}_{\bar{k}}:=[2:N]\backslash {k}$. 

%Given a distribution $p_A$ over some alphabet $\set{A}$, a positive real number $\epsilon>0$, and a positive integer $n$, $\set{T}_{\epsilon}^{(n)}(P_A)$ is the typical set in \cite{Cover'book}.   Define a function $\mathcal{C}(x):=\frac{1}{2}\log_2(1+x)$.

%Relay broadcast channels (RBCs) describe  communication networks where the transmitter sends information to a set of receivers with the help of relaying communication. In \cite{Dabora'06}, \cite{Kramer'05}, the dedicated-relay broadcast channel (BC) model  was studied, where a relay node  assists the cooperation between  two-receiver BC.  Another RBC model, called  cooperative RBC model was studied in  \cite{Liang'07Veeravalli}, \cite{Liang'07Kramer}, where each receiver acts as a relay and sends cooperative information to the other receiver. 

% In this paper, we consider  cooperative  RBCs with rate-limited feedback from the receivers to the transmitter, i.e., each receiver  sends feedback signals  to the transmitter through a noiseless and rate-limited feedback link, meanwhile, acts as relay to transmit cooperative information to the other receiver. 

\section{System model}\label{sec:model}
%\subsection{DM-PDRBC}
Consider a $4$-node discrete memoryless relay broadcast channel (DM-RBC) in which there is one  
transmitter, two receivers and one relay that helps the transmitter communicate  with the receivers, as depicted in Fig. \ref{fig:Mrelays}. This channel consists of five  finite alphabets ($\set{X},\set{X}_3, \set{Y}_1,\set{Y}_2, \set{Y}_3$) and  a collection of probability mass function  (pmf) $p(y_1,y_2,y_3|x,x_3)$ on $\set{Y}_1\times\set{Y}_2\times\set{Y}_3$, one for each $\set{X}\times\set{X}_3$. Here  $x\in\set{X}$ is the input to the transmitter,  $x_3\in\set{X}_3$ is the input to the relay, $y_3\in\set{Y}_3$ is the relay's  output , and  $y_k\in\set{Y}_k$  is receiver $k$'s output, for $k\in\{1,2\}$.

\begin{figure}[ht]
\centering
\includegraphics[width=0.43\textwidth]{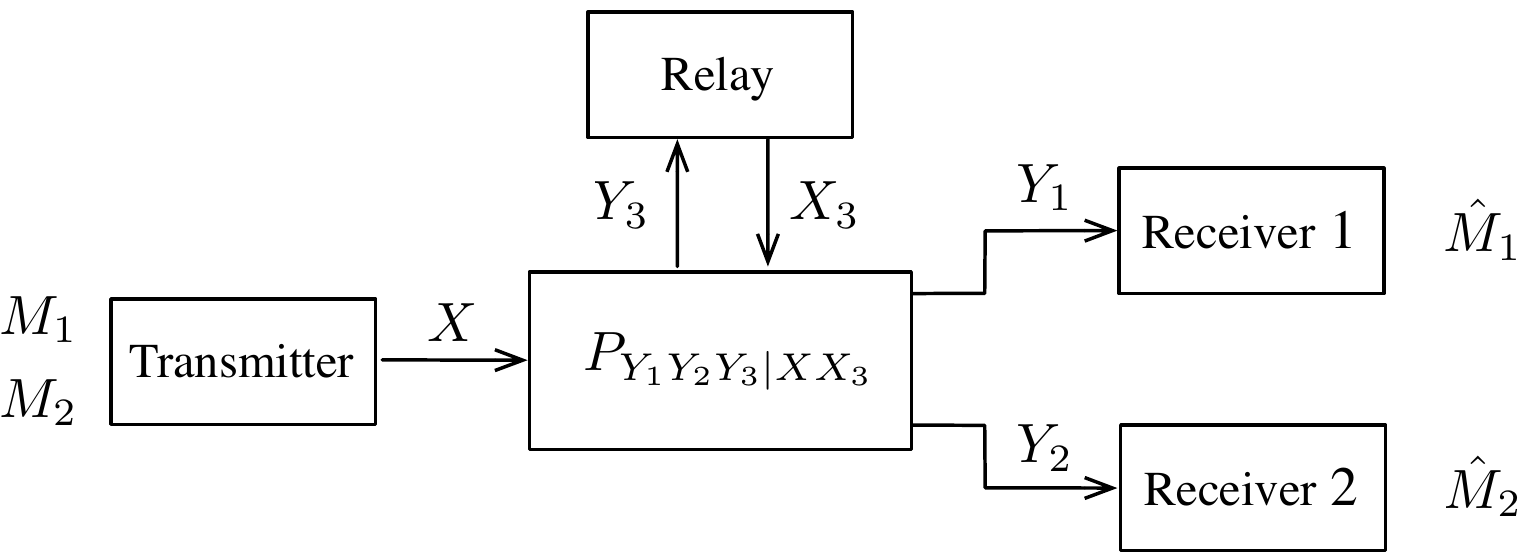}
\caption{Relay broadcast channel} \label{fig:Mrelays}
\end{figure}

The transmitter wishes to communicate a message $M_k\in[1:2^{nR_k}]$ to receiver $k$, for $k\in\{1,2\}$, with the assistance of a relay, where $n$  denotes  block length of  transmission. A $(2^{nR_1},2^{nR_2},n)$  code for this channel has
\begin{itemize}
\item two message sets $\set{M}_1=[1:2^{nR_1}]$ and $\set{M}_2=[1:2^{nR_2}]$,
\item a source encoder that maps messages $(M_1,M_2)$ to  channel input $X_{i}(M_1,M_2)$, for each time $i\in[1:n]$,
\item a relay encoder that  maps $Y^{i-1}_3$ to a sequence $X_{3,i}(Y^{i-1}_3)$, for   $i\in[1:n]$,
%\item  feedback-encoders that   produce feedback symbols $F_{k,i}(y^i_{k})$, for  each $k\in[2:N]$ and  $i\in[1:n]$,
\item  two decoders that  estimate $\hat{M}_{1}$ and $\hat{M}_{2}$ based on $Y^n_1$ and $Y^n_2$, respectively.
\end{itemize}

Suppose $M_k$ is uniformly distributed over the message set $\set{M}_k$.  A rate region $(R_1,R_2)$  is called achievable if for every blocklength $n$, there exists  a $(2^{nR_1},2^{nR_2},n)$ code such that the average probability of error \[P^{(n)}_e=\text{Pr}{[(\hat{M}_1,M_2)\neq (M_1,M_2)]}\] tends to 0 as  $n$ tends to infinity. The capacity region $\set{C}$  is the closure of the set of all achievable rate pairs $(R_1,R_2)$. 

In this paper, we mainly focus on the PDRBC. Let $\set{C}_\text{PD}$ denote the capacity region of PDRBC. Without loss of generality, assume $Y_2$ is a random degradation of $Y_1$. According to the degradation order among $Y_1,Y_2$ and $Y_3$, we define three types of DM-PDRBC  described as below.

\begin{itemize}
\item Type-I PDRBC: \\$X-X_3Y_3-Y_1-Y_2$ forms Markov chain.%, %i.e.,
%$p_{Y_1Y_2Y_3|XX_3}=p_{Y_3|XX_3}p_{Y_1|Y_3X_3} p_{Y_2|Y_1}.$
\item Type-II PDRBC:\\ $X-X_3Y_1-Y_3-Y_2$ forms Markov chain.%, %i.e., $p_{Y_1Y_2Y_3|XX_3}=p_{Y_1|XX_3}p_{Y_3|Y_1X_3}p_{Y_2|Y_3}.$
\item Type-III PDRBC:\\ $X-X_3Y_1-Y_2-Y_3$ forms Markov chain.%, i.e., $p_{Y_1Y_2Y_3|XX_3}=p_{Y_1|XX_3}p_{Y_2|Y_1X_3}p_{Y_3|Y_2}$.
\end{itemize}

\subsection{Gaussian PDRBC}\label{sec:type1Gaussian}
Consider the  Gaussian RBC, which can be described as
\begin{IEEEeqnarray*}{rCl}
Y_i&=&X+X_3+Z_i, \quad i=1,2,3
\end{IEEEeqnarray*}
where ${Z}_1,Z_2$, and $Z_3$ are  Gaussian noise components with zero mean and variances $\sigma_1^2, \sigma_2^2$ and $\sigma_3^2$, respectively. Assume average transmission power constraint $P$ on the transmitter, and $P_r$ on the relay.

Similar to the discrete memoryless case, Gaussian PDRBC can be divided into three types according to the degradation order among the outputs received at the receivers and relay:
\begin{itemize}
\item Type-I Gaussian PDRBC: $X-X_3Y_3-Y_1-Y_2$ forms Markov chain, which is equivalent to
\begin{IEEEeqnarray*}{rCl}
Y_3&=&X+X_3+Z_3,\nonumber\\%\quad Z\sim\set{N}(0,\sigma^2)
Y_1&=&X+X_3+Z_1=X+X_3+Z_3+\hat{Z}_{\textnormal{a}},\nonumber\\
Y_2&=&X+X_3+Z_2=X+X_3+Z_3+\hat{Z}_{\textnormal{a}}+\tilde{Z}_{\textnormal{a}}´		
\end{IEEEeqnarray*}
where $\hat{Z}_{\textnormal{a}}\sim\set{N}(0,\sigma_1^2-\sigma_3^2)$ and $\tilde{Z}_{\textnormal{a}}\sim\set{N}(0,\sigma_2^2-\sigma_1^2)$  are independent.
 \item Type-II Gaussian PDRBC:  $X-X_3Y_1-Y_3-Y_2$ forms Markov chain, which is equivalent to
 \begin{IEEEeqnarray*}{rCl}
Y_1&=&X+X_3+Z_1,\nonumber\\%\quad Z\sim\set{N}(0,\sigma^2)
Y_3&=&X+X_3+Z_3=X+X_3+Z+\hat{Z}_{\textnormal{b}},\nonumber\\
Y_2&=&X+X_3+Z_2=X+X_3+Z+\hat{Z}_{\textnormal{b}}+\tilde{Z}_{\textnormal{b}}
\end{IEEEeqnarray*}
where $\hat{Z}_{\textnormal{b}}\sim\set{N}(0,\sigma_3^2-\sigma_1^2)$, and $\tilde{Z}_{\textnormal{b}}\sim\set{N}(0,\sigma_2^2-\sigma_3^2)$  are independent. 
\item Type-III Gaussian PDRBC: $X-X_3Y_1-Y_2-Y_3$ forms Markov chain, which is equivalent to 
\begin{IEEEeqnarray*}{rCl}
Y_1&=&X+X_3+Z_1,\nonumber\\%\quad Z\sim\set{N}(0,\sigma^2)
Y_2&=&X+X_3+Z_2=X+X_3+Z_1+\hat{Z}_{\textnormal{c}},\nonumber\\
Y_3&=&X+X_3+Z_3=X+X_3+Z_1+\hat{Z}_{\textnormal{c}}+\tilde{Z}_{\textnormal{c}}
\end{IEEEeqnarray*}
where $\hat{Z}_{\textnormal{c}}\sim\set{N}(0,\sigma_2^2-\sigma_1^2)$, and $\tilde{Z}_{\textnormal{c}}\sim\set{N}(0,\sigma_3^2-\sigma_2^2)$  are independent.
\end{itemize}

\section{Capacity Results for DM-PDRBC}\label{Sec:Mresults}
This section  presents our main results. The  proofs are given in Sections \ref{sec:Ach},  \ref{sec:OutTDM} and \ref{Sec:proofGaussian}.
\begin{Theorem}\label{Them:type1}
 For  Type-I PDRBC, the inner bound on the capacity region consists of all rate pairs $(R_1,R_2)$ such that 
 % \begin{IEEEeqnarray}{rCl}\label{eq:regionThm1}
%R_1&\leq& I(X;Y_3|U,X_3),\nonumber\\
%R_1&\leq& I(X,X_3;Y_1|U,V),\\
%R_2&\leq& I(U,V;Y_2),\nonumber\\
%R_1+R_2&\leq& I(X;Y_3|X_3,V)\nonumber
%\end{IEEEeqnarray}
\begin{subequations}\label{eq:regionThm1}
  \begin{IEEEeqnarray}{rCl}
  R_2&\leq& I(U,V;Y_2),\\
R_1+R_2&\leq& I(U,V;Y_2)+I(X;Y_3|U,V,X_3),\quad\\
R_1+R_2&\leq& I(U,V;Y_2)+I(X,X_3;Y_1|U,V),\\
R_1+R_2&\leq& I(X;Y_3|V,X_3)
%R_1+R_2&\leq& I(X,X_3;Y_1)\nonumber
\end{IEEEeqnarray}
\end{subequations}
for some pmf $p(u,v)p(x_3|v)p(x|u,x_3)$. The outer bound  on the capacity region has  same rate constraints as \eqref{eq:regionThm1}, but under the pmf $p(u,v)p(x_3|u)p(x|u,x_3)$.%\footnote{As shown in Section \ref{sec:OutType1}, the pmf allowed for the outer bound can be  in fact is $p(u,v)p(x_3|u)p(x|u,x_3)$}.
\end{Theorem}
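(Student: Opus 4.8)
The plan is to prove the achievability (inner bound) by a block-Markov coding scheme that combines decode-forward at the relay with superposition/rate-splitting for the two receivers, and to prove the outer bound by a standard single-letterization with a judicious choice of auxiliary random variables.

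\textbf{Inner bound.} I would use $B$ blocks of length $n$, sending $B-1$ message pairs so the rate loss vanishes as $B\to\infty$. In each block the transmitter sends a fresh pair $(m_1,m_2)$ while the relay forwards (a description of) information it decoded in the previous block. The auxiliary $V$ carries the ``cloud center'' cooperatively known to the relay (hence $p(x_3|v)$), $U$ superimposes the degraded-receiver ($Y_2$) information on top of $V$, and $X$ is generated from $(U,X_3)$. Since $Y_2$ is the most degraded output, receiver~2 decodes only the $(U,V)$-layer, giving $R_2\le I(U,V;Y_2)$. Receiver~1, seeing the stronger $Y_1$, first decodes the cooperative layer and then the fresh top layer; the relay decodes $X$ from $Y_3$ after stripping $(U,V,X_3)$. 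Backward decoding (or sliding-window decoding) at receiver~1, together with the relay's decoding constraint $I(X;Y_3|U,V,X_3)$ and the MAC-type constraint $I(X,X_3;Y_1|U,V)$ into $Y_1$, yields the two sum-rate bounds with $I(U,V;Y_2)$; the last bound $R_1+R_2\le I(X;Y_3|V,X_3)$ is the total decode-forward throughput through the relay. The routine part is writing out the error-probability analysis for each decoder and invoking the packing/covering lemmas and the Markov chain $X-X_3Y_3-Y_1-Y_2$ to justify the degradedness steps.

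\textbf{Outer bound.} Given a sequence of good codes, I would identify, for each time $i$, auxiliaries $V_i:=(M_2,Y_1^{i-1},Y_2^{i-1})$ (or $(M_2,Y_1^{i-1})$, chosen so that $X_{3,i}$ is a function of the past $Y_3^{i-1}$ and the conditioning makes $p(x_3|u)$ emerge) and $U_i:=(V_i,\text{something})$, then apply Fano's inequality to $R_1,R_2$ and the standard Csisz\'ar sum identity to telescope the $\sum_i I(\cdot;Y_{k,i})$ terms. The $R_2$ bound follows immediately from Fano plus the definition of $V,U$. For the sum-rate bounds one adds $nR_1$ and $nR_2$, uses that $Y_2$ is degraded w.r.t.\ $Y_1$ (and the Markov chain placing $Y_3$ above $Y_1$) to bound the residual $R_1$ term by either $I(X;Y_3|U,V,X_3)$ or $I(X,X_3;Y_1|U,V)$, and bounds the pure relay-throughput term by $I(X;Y_3|V,X_3)$; a standard time-sharing random variable (absorbed into $V$) single-letterizes everything. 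The one subtlety is the factorization: because the relay input depends causally on $Y_3^{i-1}$, one must check that with the chosen $U_i,V_i$ the induced conditional law is of the form $p(u,v)p(x_3|u)p(x|u,x_3)$ rather than $p(x_3|v)$ — this is exactly why the inner and outer bounds differ only in whether $X_3$ hangs off $V$ or off $U$.

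\textbf{Main obstacle.} The hard part will be the outer bound: pinning down auxiliary definitions that are simultaneously consistent with (i) the causal relay mapping $X_{3,i}(Y_3^{i-1})$, (ii) all four inequalities, and (iii) the claimed Markov structure $p(u,v)p(x_3|u)p(x|u,x_3)$. Getting the Csisz\'ar-sum-identity bookkeeping to land precisely on $I(X;Y_3|U,V,X_3)$ in one chain and on $I(X,X_3;Y_1|U,V)$ in another, without slack, is where the care is needed; the degradedness $X-X_3Y_3-Y_1-Y_2$ is what makes both simplifications possible, and the gap between $p(x_3|u)$ and $p(x_3|v)$ is the (expected) reason the two bounds do not coincide in general.
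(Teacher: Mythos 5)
Your overall architecture (block-Markov decode-forward with superposition and backward decoding for achievability; Fano plus auxiliaries built from $M_2$ and past outputs for the converse) is the same as the paper's, but there is a genuine gap in the inner bound: as you describe the scheme, the cloud layer $(U,V)$ carries only receiver~2's message, so receiver~2's decoding constraint is $R_2\le I(U,V;Y_2)$ and the top-layer constraints bound $R_1$ \emph{individually}, i.e.\ $R_1\le\min\{I(X;Y_3|U,V,X_3),\,I(X,X_3;Y_1|U,V)\}$. Adding these individual constraints does not ``yield the two sum-rate bounds'' of \eqref{eq:regionThm1}; it yields a strictly smaller region (essentially the reduced region \eqref{GaussianEvaluate} that the paper only uses for the Gaussian evaluation), and proving a smaller region achievable does not establish the theorem. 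The missing idea is the rate split of $M_1$ into $(M_1',M_1'')$ with $M_1'$ riding on the $(U,V)$ cloud that receiver~2 also decodes (so its constraint becomes $R_1'+R_2\le I(U,V;Y_2)$) and $M_1''$ carried by $X_3$ and the satellite $X$ (constraints $R_1''\le I(X;Y_3|U,V,X_3)$ and $R_1''\le I(X,X_3;Y_1|U,V)$), followed by Fourier--Motzkin elimination of $R_1',R_1''$ — this is exactly how the sum-rate-only form in \eqref{eq:regionThm1} arises. Relatedly, your relay step (``decode $X$ after stripping $(U,V,X_3)$'') is inconsistent: the relay does not know the fresh $U$; it must jointly decode the fresh $(U,X)$ layers given $(V,X_3)$, which is where the constraint $R_1+R_2\le I(U,X;Y_3|V,X_3)=I(X;Y_3|V,X_3)$ comes from.

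On the converse your sketch is in the right spirit but looser than needed: the paper simply takes $U_i=(M_2,Y_1^{i-1},Y_2^{i-1})$ and $V_i=(Y_1^{i-1},Y_2^{i-1})$ (note $V$ a function of $U$, which is what gives the factorization $p(u,v)p(x_3|u)p(x|u,x_3)$ with $X_{3,i}$ a function of $Y_3^{i-1}$), needs no Csisz\'ar sum identity, and derives \emph{individual} bounds $R_1\le I(X;Y_3|U,V,X_3)$ and $R_1\le I(X,X_3;Y_1,Y_2|U,V)$ plus $R_2\le I(U,V;Y_2)$ and $R_1+R_2\le I(X;Y_3|V,X_3)$, which together imply the stated sum-rate form; your observation that the inner/outer discrepancy is precisely whether $X_3$ hangs off $V$ or off $U$ is correct and matches the paper.
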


\begin{proof}
See the achievability in Section \ref{sec:InType1}. The proof of the outer bound is given in Section \ref{sec:OutType1}.
\end{proof}

%\begin{Corollary}\label{eq:outerType1}
%For the Gaussian Type-I PDRBC, the outer bound  on the capacity region has  same rate constraints as \eqref{eq:regionThm1}, but under a pmf $p(u,v)p(x_3|u)p(x|u,x_3)$.
%\end{Corollary}

%We rewrite the \cite[Lemma 1]{} as below, which states that the capacity region is unchanged by replacing the physically degradedness of $Y_1$ with stochastic degradedness when replacing the physically degradeness by  
%the capacity region of Type-I PDRBC
\begin{Remark} In \cite[Lemma 1]{Bha'08}, it shows that when $p(y_1,y_2,y_3|x,x_3)=p(y_3|x,x_3) p(y_1,y_2|y_3,x_3)$, the capacity region of RBC depends only depends only on the marginal distributions $p(y_1|x_3,y_3)$ and $p(y_2|x_3,y_3)$. Thus the capacity region of Type-I PDRBC holds when it is   stochastically degraded. %RBC, that is, if there exists distributions $p(y_)$ such that $p_{Y_1|x_3,y_3) = \sum_{y_2} Y_2Y_3|XX_3}=p_{Y_3|XX_3}p_{Y_1|Y_3X_3} p_{Y_2|Y_1}$.
\end{Remark}

\begin{Theorem}\label{Them:type2}
 For  Type-II PDRBC, the capacity region is the set of rate pairs $(R_1,R_2)$ such that 
 \begin{subequations}\label{eq:regionThm2}
  \begin{IEEEeqnarray}{rCl}
%R_2&\leq& I(U;Y_3|X_3),\\
R_1&\leq& I(X;Y_1|U,X_3)\\
R_2&\leq & \min \left\{I(U,X_3;Y_2),I(U;Y_3|X_3)\right\}
%R_1+R_2&\leq& I(X,X_3;Y_1)\nonumber
\end{IEEEeqnarray}
\end{subequations}
for some pmf $p(u,x_3)p(x|u)$. 
\end{Theorem}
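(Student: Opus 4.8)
The plan is to establish the region in two halves: achievability by a decode-forward scheme layered on top of a broadcast/superposition code, and a matching converse exploiting the two Markov chains $X-X_3Y_1-Y_3-Y_2$. For the achievability, I would let $U$ carry the ``common-and-weak'' information aimed at receiver~2 and at the relay, and have $X$ superposed on $U$ carry the extra information for receiver~1. Using block-Markov encoding: in each block the relay, having decoded the previous block's $U$-message from $Y_3$ (possible as long as $R_2\le I(U;Y_3|X_3)$, since $X_3$ is a function of the past $Y_3$ and is known at the relay), forwards it by correlating $X_3$ with $U$; receiver~2 then decodes its message from $Y_2$ using both the direct link and the relay's cooperation, which backward-decoding makes work at rate $R_2\le I(U,X_3;Y_2)$. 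Receiver~1, being the strongest (it sees $Y_1$, from which $Y_3$ and hence $Y_2$ are degraded), can decode $U$ (and $X_3$) first and then peel it off to decode $X$ at rate $R_1\le I(X;Y_1|U,X_3)$. I should double-check that the joint pmf produced by this scheme is exactly $p(u,x_3)p(x|u)$ and that no rate is lost at receiver~1 relative to the single-user bound; the degradedness $Y_3$ (hence $Y_2$) from $Y_1$ is what guarantees receiver~1 can strip the cooperation layer for free.

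For the converse, I would start from Fano's inequality, giving $nR_1\le I(M_1;Y_1^n)+n\epsilon_n$ and $nR_2\le I(M_2;Y_2^n)+n\epsilon_n$, and also $nR_2\le I(M_2;Y_3^n)+n\epsilon_n$ since $Y_2$ is a degraded version of $Y_3$ (so the relay's observation is at least as good for the purpose of the $M_2$ message; more precisely $I(M_2;Y_2^n)\le I(M_2;Y_3^n)$ by the data-processing chain $M_2-Y_3^n-Y_2^n$, which follows from memorylessness and the Type-II Markov structure with $X_3$ a function of past $Y_3$). The standard auxiliary identification is $U_i:=(M_2,Y_1^{i-1},Y_3^{i-1})$ or a suitable variant; the key point is that $X_{3,i}$ is a deterministic function of $Y_3^{i-1}$, so $X_{3,i}$ is a function of $U_i$, which makes the target pmf $p(u,x_3)p(x|u)$ come out right (the channel input $X_i$ depends on $(M_1,M_2)$ and hence, given $U_i$, only through $M_1$, which is independent of the relay). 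Then I would use Csiszár's sum identity to turn $I(M_2;Y_2^n)$ and $I(M_2;Y_3^n)$ into single-letter forms $I(U;Y_2)$ resp.\ $I(U;Y_3|X_3)$ — for the latter one conditions on $X_{3,i}$, legitimately since it is part of (a function of) $U_i$ — and bound $I(M_1;Y_1^n)$ by $\sum_i I(X_i;Y_{1,i}|U_i,X_{3,i})$ using the degradedness to insert the conditioning on $X_{3,i}$ without loss and the chain rule to drop $(M_1,Y_1^{i-1})$ in favor of $X_i$. A time-sharing variable can be absorbed into $U$.

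The main obstacle I anticipate is handling the $R_2\le I(U;Y_3|X_3)$ constraint cleanly in the converse: unlike the classical degraded broadcast channel, here the ``better output'' $Y_3$ belongs to the relay, which also generates the channel input $X_{3,i}$ from its past, so I must be careful that the conditioning on $X_{3,i}$ in the single-letter bound is consistent with $X_{3,i}$ being a function of $U_i$ (otherwise the Markov-chain requirement $U-(X,X_3)-(Y_1,Y_3,Y_2)$ for the region, or rather the factorization $p(u,x_3)p(x|u)$, fails). Relatedly, I need the two chains $X-X_3Y_1-Y_3$ and $X_3Y_1-Y_3-Y_2$ to license both ``$Y_1$ is better than $Y_3$ given $X_3$'' (for the $R_1$ bound and for not losing rate) and ``$Y_3$ is better than $Y_2$'' (for the second $R_2$ bound); checking that these degradedness statements survive the $n$-letter-to-single-letter reduction with the relay feedback present is the delicate part. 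The achievability side should be routine once the block-Markov/backward-decoding bookkeeping is set up, modulo verifying the error-probability analysis closes with exactly these three mutual-information terms and no residual constraints.
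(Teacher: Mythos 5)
Your proposal is correct and follows essentially the same route as the paper: a block-Markov decode-forward/superposition scheme yielding exactly the three mutual-information constraints, and a converse with auxiliary $U_i=(M_2,\text{past outputs})$ exploiting that $X_{3,i}$ is a function of $Y_3^{i-1}$ together with the Type-II degradedness chains. The only (harmless) deviations are cosmetic: the paper bounds $nR_2\le I(M_2;Y_2^n,Y_3^n)$ instead of invoking the $n$-letter chain $M_2-Y_3^n-Y_2^n$, includes $Y_2^{i-1}$ in $U_i$, and needs no Csisz\'ar sum identity.
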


\begin{proof}
See the achievability  in Section \ref{sec:InType2}. The converse is given in Section \ref{sec:OutType2}.
\end{proof}

\begin{Theorem}\label{Them:type3}
 For  Type-III PDRBC, the capacity region is the set of rate pairs $(R_1,R_2)$ such that 
 \begin{subequations}
  \begin{IEEEeqnarray}{rCl}\label{eq:regionThm2}
R_1&\leq& I(X;Y_1|U,X_3=x_3)\\
R_2&\leq &  I(U;Y_2|X_3=x_3),
\end{IEEEeqnarray}
\end{subequations}
for some value $x_3\in\set{X}_3$ and pmf $p(u,x)$. 
\end{Theorem}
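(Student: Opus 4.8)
The plan is to show that in a Type-III channel the relay is useless: it observes the most degraded output, so whatever it re-injects into the channel conveys nothing the receivers could not have produced themselves, and the problem collapses to a physically degraded broadcast channel with the relay input frozen to one symbol $x_3$.

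For achievability I would let the relay transmit the constant symbol $x_3$ at every time, ignoring $Y_3^{i-1}$. The induced channel from $X$ to $(Y_1,Y_2)$ is then a physically degraded broadcast channel (recall $Y_2$ is a degradation of $Y_1$), and I would apply superposition coding \cite{Gamal'book}: a cloud centre carrying $M_2$ and a satellite carrying $M_1$, generated from the maximizing $p(u,x)$. The weaker receiver~$2$ decodes $M_2$ from $Y_2^n$, which needs $R_2\le I(U;Y_2|X_3=x_3)$; the stronger receiver~$1$ decodes $M_2$ and then $M_1$ from $Y_1^n$, which needs $R_1\le I(X;Y_1|U,X_3=x_3)$ together with $R_2\le I(U;Y_1|X_3=x_3)$ and $R_1+R_2\le I(X;Y_1|X_3=x_3)$, the latter two being implied by degradedness.

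For the converse I would start from Fano's inequality, $nR_1\le I(M_1;Y_1^n|M_2)+n\epsilon_n$ and $nR_2\le I(M_2;Y_2^n)+n\epsilon_n$, using $M_1\perp M_2$. The step I expect to be the main obstacle is to make the relay input harmless. Physical degradation gives $p(y_1,y_2,y_3|x,x_3)=p(y_1|x,x_3)\,p(y_2|y_1,x_3)\,p(y_3|y_2)$; unrolling the relay recursion $X_{3,i}=\phi_i(Y_3^{i-1})$ while generating $Y_3^{i-1}$ from $Y_2^{i-1}$, and $Y_2^{i-1}$ from $Y_1^{i-1}$, through the corresponding memoryless degradation channels, I obtain an equivalent model in which $X_{3,i}$ is a deterministic function of $(Y_2^{i-1},W)$ — hence also of $(Y_1^{i-1},W)$ — where $W$ collects the fresh degradation noise, is independent of the messages and of the forward channel noise, and leaves the joint law of $(M_1,M_2,X^n,Y_1^n,Y_2^n)$ unchanged; moreover $Y_2^{i-1}$ is itself a function of $(Y_1^{i-1},W)$.

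Conditioning on $W$ and setting $U_i:=(M_2,Y_2^{i-1},W)$, I can then insert $X_{3,i}$ into every conditioning at no cost and single-letterise in the usual way, obtaining $nR_2\le\sum_i I(U_i;Y_{2,i}|X_{3,i})+n\epsilon_n$ and $nR_1\le\sum_i I(X_i;Y_{1,i}|U_i,X_{3,i})+n\epsilon_n$, with $U_i-X_i-(Y_{1,i},Y_{2,i})$ holding given $X_{3,i}$ since the time-$i$ outputs are drawn afresh from $(X_i,X_{3,i})$. Introducing a time-sharing variable $Q\sim\mathrm{Unif}[1:n]$ and absorbing it into $U$, each achievable pair $(R_1,R_2)$ is componentwise dominated by a convex combination, over the values of $X_3$, of pairs $(I(X;Y_1|U,X_3=x_3),\,I(U;Y_2|X_3=x_3))$; since every such pair lies in the stated region and the capacity region is convex, this yields the converse. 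For the Gaussian PDRBC, and more generally whenever one relay symbol dominates, the convex combination collapses to a single $x_3$ at once.
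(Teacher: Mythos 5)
Your achievability is exactly the paper's: freeze the relay at a constant symbol $x_3$ and use superposition coding for the resulting degraded broadcast channel, with the redundant constraints absorbed by degradedness. Your single-letterization of the converse takes a different but legitimate route: where the paper simply gives $Y_3^n$ to receiver 2 in the Fano step and puts $Y_3^{i-1}$ inside the auxiliary, choosing $U_i=(M_2,Y_2^{i-1},Y_3^{i-1})$ so that $X_{3,i}$ is automatically a function of $U_i$, you realize the degradations $Y_1\to Y_2\to Y_3$ through explicit independent noises $W$ and set $U_i=(M_2,Y_2^{i-1},W)$. Both devices serve the same purpose, namely making $X_{3,i}$ a deterministic function of the conditioning variables, and your chains $nR_2\le\sum_i I(U_i;Y_{2,i}|X_{3,i})+n\epsilon_n$ and $nR_1\le\sum_i I(X_i;Y_{1,i}|U_i,X_{3,i})+n\epsilon_n$ do go through (using $W\perp(M_1,M_2)$ and the memorylessness of the time-$i$ channel given $(X_i,X_{3,i})$). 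Up to this point the proposal is sound, just more elaborate than the paper's bookkeeping.

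The genuine gap is the final step. What you have proved is that every achievable pair is componentwise dominated by a convex combination, with weights $p(x_3)$, of pairs $\bigl(I(X;Y_1|U,X_3=x_3),\,I(U;Y_2|X_3=x_3)\bigr)$, each of which lies in the stated region. Invoking convexity of the capacity region does not finish the argument: convexity of $\mathcal{C}$ only tells you that the convex hull of the stated region is achievable, i.e., it helps the achievability direction, not the converse. The stated region is a union over a single deterministic $x_3$ of broadcast-channel regions and is not obviously convex, so what you have established is only $\mathcal{C}\subseteq\overline{\mathrm{conv}}$ of the stated region, which is weaker than the theorem; your own closing caveat that the combination ``collapses to a single $x_3$'' in the Gaussian case, or when one relay symbol dominates, concedes exactly the missing piece for the discrete memoryless statement. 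The paper closes this step with a linearity/extreme-point argument: for fixed conditionals $p(u,x|x_3)$ the quantities $I(U;Y_2|X_3=x_3)$ and $I(X;Y_1|U,X_3=x_3)$ enter the bounds linearly in the weights $p(x_3)$, so any weighted-sum (boundary) point of the outer bound is attained at a vertex of the simplex over $\mathcal{X}_3$, i.e., with a deterministic $x_3$. You need to supply this argument (or an equivalent reduction from random to constant $X_3$); without it the reduction from a mixture over relay symbols to a single $x_3$ — which is the crux of Theorem 3 — is not established.
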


\begin{proof}
In the achievability, let the relay send a constant value $x_3$, and the transmitter use the traditional superposition coding  to send the source messages, i.e., the weak receiver's message stored in a cloud center codeword $u^n$  is decoded by both receivers, and the strong receiver's message conveyed through a satellite codeword $x^n$ is only decoded by the strong receivers. The converse  is given in Section \ref{sec:OutType3}.
\end{proof}

%Note that a rigorous proof that our  results above hold also for the Gaussian case and Gaussian random variables is omitted for brevity. In the following, we 

%A rigorous proof that our  results in Theorem   \ref{Them:type1}, \ref{Them:type2} and \ref{Them:type3} hold also for the Gaussian PDRBC and Gaussian random variables is omitted for brevity. 
  
\begin{Theorem}\label{Them:Gaussian}
%For the Gaussian PDRBC, the capacity region is given below.
%\begin{itemize}
 For the  Type-I Gaussian PDRBC, the inner bound of Theorem \ref{Them:type1} is tight and the capacity region is the set of rate pairs $(R_1,R_2)$ such that
\begin{subequations}\label{region:T1}
\begin{IEEEeqnarray}{rCl}
&&R_1\leq \min  \left\{C\left(\bar{\alpha}\theta\frac{P}{\sigma_3^2}\right),\right.\nonumber\\ && \left.\hspace{1.5cm} C\left( \frac{\theta P+\theta_r P_r+2\sqrt{\alpha\theta \theta_r  PP_r}}{\sigma_1^2} \right)\right\},\\
&&R_2\leq C\left(\frac{\bar{\theta}P+\bar{\theta}_rP_r+2\sqrt{\beta\bar{\theta}\bar{\theta}_rPP_r}}{\theta P+\theta_r P_r+2\sqrt{\alpha\theta \theta_r  PP_r}+\sigma_2^2}\right),\quad\\
&&R_1+R_2 \leq C\left(\frac{\bar{\beta}\bar{\theta}P+\bar{\alpha}\theta  P}{\sigma^2_3}\right)
\end{IEEEeqnarray}
\end{subequations}
where $0\leq\alpha,\beta,\theta,\theta_r\leq1$. 
\item For the  Type-II Gaussian PDRBC, the  capacity region is the set of rate pairs $(R_1,R_2)$ such that
\begin{subequations}\label{region:T2}
\begin{IEEEeqnarray}{rCl}
R_1&\leq& C\left(\frac{\alpha P}{\sigma_1^2} \right),\\
R_2&\leq& \min\left\{ C\left(\frac{\bar{\alpha}P+P_r}{\alpha P+\sigma^2_2} \right) ,C\left(\frac{(\beta-{\alpha})P}{\alpha P+\sigma^2_3} \right)\right\} \quad % \\ 
%R_2&\leq&C\left(\frac{(\beta-{\alpha})P}{\alpha P+\sigma^2_3} \right)
\end{IEEEeqnarray}
\end{subequations}
where $0\leq \rho_1,\rho_2,\alpha,\beta\leq 1$ and $\beta\geq{\alpha}$. 

\item  For the  Type-III Gaussian PDRBC,  the capacity region is the set of rate pairs $(R_1,R_2)$ such that
\begin{subequations}\label{region:T3}
\begin{IEEEeqnarray}{rCl}
R_1&\leq& C\left(\frac{\alpha P}{\sigma_1^2} \right),~
R_2\leq C\left(\frac{\bar{\alpha}P}{\alpha P+\sigma^2_2} \right) 
\end{IEEEeqnarray}
\end{subequations}
where $0\leq \alpha\leq 1$.
%\end{itemize}
\end{Theorem}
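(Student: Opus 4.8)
The plan is to prove Theorem \ref{Them:Gaussian} in three parts, one for each Gaussian PDRBC type, in each case showing that the single-letter DM bounds (Theorems \ref{Them:type1}--\ref{Them:type3}) evaluated with a jointly Gaussian input ensemble give the claimed regions, and that no other input distribution can do better. For the achievability direction in all three types, I would choose the auxiliary random variables and channel inputs to be jointly Gaussian: write $X = X' + X''$ and $X_3 = X_3' + X_3''$ as sums of independent Gaussian pieces whose powers are split by the parameters $\theta,\theta_r$ (the transmit/relay power fractions assigned to the common versus private layers) and whose cross-correlations with the relay signal are parametrized by $\alpha,\beta$ (the correlation coefficients implementing coherent block-Markov beamforming). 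Concretely, for Type-I I would set $V$ to carry the common/relay-cooperation layer with $X_3$ correlated to it, $U$ a further refinement, and $X$ the innermost layer; plugging these Gaussian choices into \eqref{eq:regionThm1} and computing the four mutual informations (each a difference of Gaussian differential entropies) yields the three constraints in \eqref{region:T1} after one constraint turns out to be redundant. For Types II and III the computation is shorter since the regions in Theorems \ref{Them:type2}--\ref{Them:type3} have fewer terms; here $\alpha$ is the power fraction to the strong receiver's private message and $\beta$ (Type-II only) an additional correlation/power parameter, and one simply evaluates $C(\cdot)$ on the resulting signal-to-interference-plus-noise ratios, using the degradedness structure ($Z_2 = Z_3 + \tilde Z$, etc.) to simplify.

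For the converse direction, the main work is to show the Gaussian choice is optimal, and the standard tool is the entropy power inequality (EPI) together with a careful identification of the auxiliaries. I would start from the single-letter outer bound already proved for the DM case (Theorems \ref{Them:type1} outer bound, and the exact regions of Theorems \ref{Them:type2} and \ref{Them:type3}), specialized to the Gaussian channel. The key step is: given any code, for each type introduce a parameter (such as $\alpha = \frac{1}{nP}\sum_i \mathrm{Var}(X_i \mid U_i,X_{3,i})$ or the appropriate conditional variance) measuring how much transmit power is "left" for the innermost message after conditioning on the cloud center, and similarly a correlation parameter between $X$ and $X_3$. Then bound each mutual information term: the terms of the form $I(X;Y_3|U,V,X_3)$ or $I(X;Y_1|U,X_3)$ are upper bounded directly by $C$ of the conditional SINR via the maximum-entropy property of the Gaussian; the "outer-layer" terms such as $I(U,V;Y_2)$ are upper bounded by invoking the conditional EPI on the degraded chain ($Y_2$ is $Y_1$ or $Y_3$ plus independent Gaussian noise) to show that the entropy $h(Y_2|U,V)$ cannot be too small given the constraint $h(Y_1|U,V)$ — this is exactly the Bergmans-style argument used for the degraded Gaussian broadcast channel, now with the extra relay input $X_3$ carried along in the conditioning. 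The cooperative beamforming gain (the $2\sqrt{\alpha\theta\theta_r PP_r}$ cross terms) is captured by a Cauchy--Schwarz bound on $\mathbb{E}[X X_3]$ together with the per-symbol power constraints.

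The hardest part, I expect, will be the Type-I converse, for three reasons: (i) it has four coupled rate constraints and a two-step auxiliary chain $U,V$, so one must track two conditional-variance parameters and the cross-correlation simultaneously and verify that the resulting system of inequalities, after optimizing over the "hidden" power allocations, collapses to exactly the three-constraint region \eqref{region:T1} with no slack; (ii) the relay output $Y_3$ is the \emph{least} noisy here, yet the relay both helps (via $I(X;Y_3|V,X_3)$ as a decode-forward-type bound) and must be beamformed coherently with the source, so the EPI applications must be chained in the right order — bound $h(Y_1|U,V)$ in terms of $h(Y_3|U,V)$ first, then $h(Y_2|U,V)$ in terms of $h(Y_1|U,V)$; (iii) one must confirm that the DM outer bound's input constraint $p(u,v)p(x_3|u)p(x|u,x_3)$ (note $X_3 \perp V \mid U$) does not actually lose the beamforming correlation between $X$ and $X_3$ — it does not, because $X$ may still be correlated with $X_3$ given $U$, which is all the Gaussian achievable scheme needs. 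For Types II and III the converse is comparatively routine: Type-III is essentially the degraded Gaussian broadcast channel with a fixed relay symbol (absorb $x_3$ into the means), so Bergmans' converse applies almost verbatim; Type-II needs one EPI application for the $I(U;Y_3|X_3)$ versus $I(X;Y_1|U,X_3)$ trade-off plus a second, on the longer chain $Y_1 - Y_3 - Y_2$, to handle $I(U,X_3;Y_2)$, again with the relay power and the $X$--$X_3$ correlation entering through Cauchy--Schwarz. I would finish by checking the corner cases (e.g. $\sigma_3^2 \le \sigma_1^2$ in Type-I so that the noise decompositions have nonnegative variances) and noting that continuity lets us take the closure to match the capacity region exactly.
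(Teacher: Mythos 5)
Your achievability plan coincides with the paper's: jointly Gaussian auxiliaries plugged into the regions of Theorems \ref{Them:type1}--\ref{Them:type3}, with $\theta,\theta_r$ as power splits and $\alpha,\beta$ as correlation parameters, exactly as in Sections \ref{Proof:GaussianT1}--\ref{sec:type3Gaussian}. Your Type-II and Type-III converses also follow the paper's route (identify a conditional-variance parameter via the intermediate-value argument, e.g.\ $h(Y_3|U,X_3)=\tfrac12\log\left(2\pi e(\alpha P+\sigma_3^2)\right)$, then chain the conditional EPI along the physical degradation $Y_2=Y_3+\tilde Z_{\textnormal{b}}$, $Y_3=Y_1+\hat Z_{\textnormal{b}}$). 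One caveat there: the paper never invokes Cauchy--Schwarz on $\mathbb{E}[XX_3]$; it bounds $h(Y_2)\le \tfrac12\log\left(2\pi e(P+P_r+\sigma_2^2)\right)$ with no beamforming cross term. If you carry the correlation term $2\sqrt{PP_r}$ through, as you propose, your outer bound on $I(U,X_3;Y_2)$ will be strictly larger than the claimed constraint $C\bigl((\bar\alpha P+P_r)/(\alpha P+\sigma_2^2)\bigr)$ and will not collapse to \eqref{region:T2}; so you must either show that the correlation admissible under the outer-bound pmf $p(u,x_3)p(x|u)$ cannot enlarge the region, or recognize that this is a genuine subtlety your sketch (and the paper's two-line bound) does not resolve.

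The substantive divergence is Type-I, which you correctly identify as the hardest case but for which you propose the wrong amount of work relative to what the paper does: the paper proves \emph{no} new Type-I converse at all. It evaluates its inner bound to get $\set{C}_{\text{inner,Wu}}$ in \eqref{eq:type1in}, observes that Bhaskaran's inner bound $\set{C}_{\text{inner,Bha}}$ of \cite{Bha'08} has the same rate expressions plus one extra constraint, hence $\set{C}_{\text{inner,Bha}}\subseteq\set{C}_{\text{inner,Wu}}\subseteq\set{C}_{\text{PD}}$, and then cites \cite{Bha'08} for $\set{C}_{\text{inner,Bha}}=\set{C}_{\text{PD}}$, so tightness follows by a sandwich argument. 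Your plan instead is a from-scratch EPI converse starting from the DM outer bound of Theorem \ref{Them:type1}; that is essentially re-deriving Bhaskaran's converse, and as sketched it leaves the decisive steps unexecuted: you must extract four mutually consistent parameters $(\alpha,\beta,\theta,\theta_r)$ and the source--relay correlation from an arbitrary code in which $X_{3,i}$ is a causal function of $Y_3^{i-1}$, chain two EPI applications in the order dictated by $X-X_3Y_3-Y_1-Y_2$, and also contend with the fact that the DM inner and outer bounds of Theorem \ref{Them:type1} are stated under different factorizations ($p(x_3|v)$ versus $p(x_3|u)$), so matching them in the Gaussian case is not automatic. Saying the system of inequalities ``collapses to exactly \eqref{region:T1}'' names the goal rather than proves it. Either carry out that program in full (i.e., reprove the converse of \cite{Bha'08}) or adopt the paper's containment argument, which is what actually closes the Type-I case.
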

\begin{proof}
See Section \ref{Proof:GaussianT1}, \ref{sec:type2Gaussian} and \ref{sec:type3Gaussian}.%\ref{sec:type1Gaussian}.
\end{proof}

%\begin{Remark}
%The capacity region of RBC depends only on the marginal distributions $p(y_1|x,x_3,y_3)$ and $p(y_2|x,x_3,y_3)$. Thus the capacity region of Type-I PDRBC is the same as that of the corresponding stochastically degraded RBC.
%\end{Remark}

\section{Coding Schemes for DM-PDRBC}\label{sec:Ach}
%Consider a distribution $P_{U_1U_2}$, and let $u_1$ and $u_2$ be respective alphabets of $U_1$ and $U_2$. Consider also a function $f()$ that maps symbols in $u_1$ and $u_2$ to symbols $X$. We write $X^n=f^n(U_1^n,U_2^n)$ to mean that 0$X_i=f(U_{1i},U_{2i})$ for $i=1,2...,n$.

\subsection{Inner bound for Type-I PDRBC}\label{sec:InType1}
We present a block-Markov coding scheme that consists of $B+1$ blocks, where messages $M_{1,b}\in[1:2^{nR_1}]$ and $M_{2,b}\in[1:2^{nR_2}]$, for $b\in[1:B]$, are sent to the receivers over $B+1$ blocks.  Split message $M_{1,b}$ into $(M'_{1,b},M''_{1,b})$, where $M'_{1,b}\in[1:2^{nR_1'}]$ and $M''_{1,b}\in[1:2^{nR_1''}]$ are independent with each other. Let messages $M_{1,B+1}=M_{2,B+1}=1$ and assume  $(M_{1,B+1},M_{2,B+1})$  are known by the relay and receivers before communication.

\subsubsection{Codebook}
Fix the pmf $p(v)p(u|v)p(x_3|v)p(x|u,x_3)$.
For each block $b\in[1:B+1]$, randomly and independently generate $2^{n(R_{2}+{R}'_{1})}$ sequences $v_{b}^n(m_{2,b-1},m'_{1,b-1})\sim \prod^n_{i=1}p_{V}(v_{b,i})$.  For each $(m_{2,b-1},m'_{1,b-1})$, randomly and independently generate $2^{n{R}''_1}$ sequences ${x}_{3,b}^n(m''_{1,b-1}|m_{2,b-1},m'_{1,b-1})\sim \prod^n_{i=1}p_{X_3|V}(x_{3,b,i}|v_{b,i})$. For each $(m_{2,b-1},m'_{1,b-1})$, randomly and independently generate $2^{n({R}'_1+R_2)}$ sequences ${u}_{b}^n(m_{2,b},m'_{1,b}|m_{2,b-1},m'_{1,b-1})\sim\prod^n_{i=1}p_{U|V}(u_{b,i}|v_{b,i})$. For each $(m_{2,b},m'_{1,b},m'_{1,b-1},m''_{1,b-1},m_{2,b-1})$, randomly and independently generate $2^{n{R}''_1}$ sequences 
  \begin{equation*}
  \begin{aligned}
  x_{b}^n(m''_{1,b}| m_{2,b},m'_{1,b},m''_{1,b-1},m_{2,b-1},m'_{1,b-1})\\\sim \prod^n_{i=1}p_{X|X_3U}(x_{b,i}|x_{3,b,i},u_{b,i}). 
\end{aligned}
\end{equation*}
%\end{figure*}

\subsubsection{Transmitter's encoding}
  In each block $b\in[1:B+1]$, the transmitter  sends \[x_{b}^n(m''_{1,b}| m_{2,b},m'_{1,b},m''_{1,b-1},m_{2,b-1},m'_{1,b-1}).\]

%where $\hat{\textbf{l}}_{0}=(1,\ldots,1)$.
%given ${{w}_{1b}}$, ${{w}_{2b}}$, ${{l}_{1b-1}}$, ${{l}_{2b-1}}$, try to find a pair $({{a}_{1b}},{{a}_{2b}})$ such that $(u_{1b}^{n}({{w}_{1b}},{{l}_{1b-1}},{{a}_{1b}}),u_{2b}^{n}({{w}_{2b}},{{l}_{2b-1}},{{a}_{2b}}))\in T_{\varepsilon }^{n}({{P}_{{{U}_{1}}{{U}_{2}}}})$. If there are more pairs satisfied, randomly choose one and send $X_{b}^{n}={{f}^{n}}(u_{1b}^{n},u_{2b}^{n})$. Let ${{l}_{10}}={{l}_{20}}=1$. Then, receivers get $(y_{1b}^{n},y_{2b}^{n})$ and feed them back to transmitter. After transmitter receives $(y_{1b}^{n},y_{2b}^{n})$, it tries to find a pair $({{l}_{1b}},{{l}_{2b}})$ such that

\subsubsection{Relay's encoding}
In each block $b\in[1:B+1]$, the relay decodes $u_b^n$ and $x_b^n$, for $b\in[1:B]$, by looking for a tuple of messages $(\hat{m}'_{1,b},\hat{m}_{2,b},\hat{m}''_{1,b})$ such that 
\begin{IEEEeqnarray}{rCl}\label{eq:Sch1CF}
&&\big(u_b^n(\hat{m}_{2,b},\hat{m}'_{1,b}|m_{2,b-1},m'_{1,b-1}),\nonumber\\&&  v_{b}^n(m_{2,b-1},m'_{1,b-1}), x_{3,b}^n(m''_{1,b-1}|m_{2,b-1},m'_{1,b-1}),\nonumber\\&& 
x_{b}^n(\hat{m}''_{1,b}| \hat{m}_{2,b},\hat{m}'_{1,b},m''_{1,b-1},m_{2,b-1},m'_{1,b-1}),y_{3,b}^n
\big) \nonumber\\&& \in\mathcal{T}^n_{\epsilon/2}(p_{UVXX_3Y_3}).\nonumber
\end{IEEEeqnarray}
Then, it sends $x^n_{3,b+1}(\hat{m}''_{1,b}|\hat{m}_{2,b},\hat{m}'_{1,b})$ in block $b+1$.

By the independence of the codebooks, the Markov lemma \cite{Gamal'book}, packing lemma \cite{Gamal'book} and the induction on backward decoding,  the decoding is successful with high probability if
\begin{IEEEeqnarray}{rCl}
R_1''&\leq& I(X;Y_3,V|X_3,U)-\delta(\epsilon/2)\nonumber\\
&\stackrel{(a)}=& I(X;Y_3|U,V,X_3)-\delta(\epsilon/2)\label{eq:th1.1}\\
R_1+R_2&\leq& I(X,U;Y_3|V,X_3)-\delta(\epsilon/2),\nonumber\\
&\stackrel{(b)}=& I(X;Y_3|V,X_3)-\delta(\epsilon/2)
\end{IEEEeqnarray}
where  (a) and (b) follow  from Markov chains  $V-(U,X_3)-X$ and $U-(V,X,X_3)-Y_3$, respectively.

\subsubsection{Decoding}
Receiver  2 applies backward decoding to decode $v_b^n$ and $u_b^n$,  for $b\in[1:B]$. Specifically, after $(B+1)$-block transmission, assuming receiver 2 already decodes $(m'_{1,b}, m_{2,b})$ based on $y^n_{2,b+1}$, it looks for a pair of messages $(\hat{m}'_{1,b-1}, \hat{m}_{2,b-1})$ such that 
\begin{IEEEeqnarray}{rCl}
&&\big(u_b^n({m}_{2,b},{m}'_{1,b}|\hat{m}_{2,b-1},\hat{m}'_{1,b-1}),\nonumber\\&&v_{b}^n(\hat{m}_{2,b-1},\hat{m}'_{1,b-1}),y_{2,b}^n
\big) \in\mathcal{T}^n_{\epsilon}(p_{UVY_2}).\nonumber
\end{IEEEeqnarray}
By the independence of the codebooks, the Markov lemma, packing lemma and the induction on backward decoding,  the decoding is successful with high probability if
\begin{IEEEeqnarray}{rCl}\label{}
R_1'+R_2\leq I(U,V;Y_2)-\delta(\epsilon)
\end{IEEEeqnarray} 

Receiver  1 applies backward decoding to decode $u_b^n,v_b^n,x_{3,b}^n$ and $x_b^n$, for $b\in[1:B+1]$. Specifically, after $(B+1)$-block transmission, assuming receiver 1 already decodes $({m}''_{1,b},{m}_{2,b},{m}'_{1,b})$ based on $y^n_{1,b+1}$, it looks for a tuple of messages $(\hat{m}''_{1,b-1},\hat{m}_{2,b-1},\hat{m}'_{1,b-1})$ such that 
\begin{IEEEeqnarray}{rCl}\label{}
&&\big( u_b^n({m}_{2,b},{m}'_{1,b}|\hat{m}
_{2,b-1},\hat{m}'_{1,b-1}),\nonumber\\&&v_{b}^n(\hat{m}_{2,b-1},\hat{m}'_{1,b-1}), x_{3,b}^n(\hat{m}''_{1,b-1}|\hat{m}_{2,b-1},\hat{m}'_{1,b-1}),\nonumber\\&&
x_{b}^n({m}''_{1,b}|{m}_{2,b},{m}'_{1,b},\hat{m}''_{1,b-1},\hat{m}_{2,b-1},\hat{m}'_{1,b-1}),y_{1,b}^n
\big)\nonumber\\&& \in\mathcal{T}^n_{\epsilon}(p_{UVXX_3Y_1}).\nonumber
\end{IEEEeqnarray}

By the independence of the codebooks, the Markov lemma, packing lemma and the induction on backward decoding,  the decoding is successful with high probability if
\begin{IEEEeqnarray}{rCl}\label{eq:th1.9}
R_1+R_2&\leq& I(U,V,X,X_3;Y_1)-\delta(\epsilon)\nonumber\\
&\stackrel{(a)}=& I(X,X_3;Y_1)-\delta(\epsilon),\\
R''_1&\leq& I(X,X_3;Y_1|U,V)-\delta(\epsilon)\label{eq:th1.9}
\end{IEEEeqnarray} 
where (a) holds because of Markov chain $(U,V)-(X,X_3)-Y_1$.

Combining (\ref{eq:th1.1}--\ref{eq:th1.9}), letting $\epsilon\to 0$, and using Fourier-Motzkin elimination \cite{Gamal'book} to eliminate ${R}'_1$ and $R_1''$, we obtain the inner bounds as below.
 \begin{subequations}
  \begin{IEEEeqnarray}{rCl}\label{eq:regionThm1temp}
R_2&\leq& I(U,V;Y_2),\label{eq:1}\\
R_1+R_2&\leq& I(X,X_3;Y_1),\label{eq:usless1}\\
R_1+R_2&\leq& I(X;Y_3|V,X_3),\\
R_1+R_2&\leq& I(U,V;Y_2)+I(X;Y_3|U,V,X_3),\\
R_1+R_2&\leq& I(U,V;Y_2)+I(X,X_3;Y_1|U,V).\label{eq:2}
\end{IEEEeqnarray}
\end{subequations}
Notice that $I(U,V;Y_2)\leq I(U,V;Y_1)$ and by  \eqref{eq:2}, we have $R_1+R_2\leq I(U,V;Y_2)+I(X,X_3;Y_1|U,V)\leq I(X,X_3;Y_1)$, which makes constraint \eqref{eq:usless1} invalid.

\subsection{Inner bound for  Type-II PDRBC}\label{sec:InType2}
We present a block-Markov coding scheme that consists of $B+1$ blocks, where messages $M_{1,b}\in[1:2^{nR_1}]$ and $M_{2,b}\in[1:2^{nR_2}]$, for $b\in[1:B]$, are sent to the receivers over $B+1$ blocks.  Let messages $M_{1,B+1}=M_{2,B+1}=1$ and assume   $(M_{1,B+1},M_{2,B+1})$  are known by the relay and receivers before communication.

\subsubsection{Codebook}
Fix the pmf $p(x_3)p(u|x_3)p(x|u)$.
  For each block $b\in[1:B+1]$, randomly and independently generate $2^{nR_{2}}$ sequences $x_{3,b}^n(m_{2,b-1})\sim \prod^n_{i=1}p_{X_3}(x_{3,b,i})$, $m_{2,b-1}\in [1:2^{nR_2}]$.  For each $m_{2,b-1}$, randomly and independently generate $2^{n{R}_2}$ sequences ${u}_{b}^n(m_{2,b}|m_{2,b-1})\sim \prod^n_{i=1}p_{U|X_3}(u_{b,i}|x_{3,b,i})$. For each $(m_{2,b},m_{2,b-1})$, randomly and independently generate $2^{n{R}_1}$ sequences ${x}_{b}^n(m_{1,b}|m_{2,b},m_{2,b-1})\sim \prod^n_{i=1}p_{X|U}(x_{b,i}|u_{b,i})$.

%\end{figure*}

\subsubsection{Transmitter encoding}
  In each block $b\in[1:B+1]$,the transmitter  sends $x_{b}(m_{1,b}|m_{2,b},m_{2,b-1}).$

%where $\hat{\textbf{l}}_{0}=(1,\ldots,1)$.
%given ${{w}_{1b}}$, ${{w}_{2b}}$, ${{l}_{1b-1}}$, ${{l}_{2b-1}}$, try to find a pair $({{a}_{1b}},{{a}_{2b}})$ such that $(u_{1b}^{n}({{w}_{1b}},{{l}_{1b-1}},{{a}_{1b}}),u_{2b}^{n}({{w}_{2b}},{{l}_{2b-1}},{{a}_{2b}}))\in T_{\varepsilon }^{n}({{P}_{{{U}_{1}}{{U}_{2}}}})$. If there are more pairs satisfied, randomly choose one and send $X_{b}^{n}={{f}^{n}}(u_{1b}^{n},u_{2b}^{n})$. Let ${{l}_{10}}={{l}_{20}}=1$. Then, receivers get $(y_{1b}^{n},y_{2b}^{n})$ and feed them back to transmitter. After transmitter receives $(y_{1b}^{n},y_{2b}^{n})$, it tries to find a pair $({{l}_{1b}},{{l}_{2b}})$ such that

\subsubsection{Relay encoding}
In each block $b\in[1:B+1]$, the relay decodes $u_b^n$ for $b\in[1:B]$, by looking for a tuple of messages $\hat{m}_{2,b}$ such that
\begin{IEEEeqnarray}{rCl}\label{eq:Sch1CF}
\big( u_b^n(\hat{m}_{2,b},|m_{2,b-1}), x_{3,b}^n(m_{2,b-1}),y_{3,b}^n
\big) \in\mathcal{T}^n_{\epsilon/2}(p_{UX_3Y_3}).\nonumber
\end{IEEEeqnarray}

By the independence of the codebooks, the Markov lemma, packing lemma  and the induction on backward decoding,  the decoding is successful with high probability if
\begin{IEEEeqnarray}{rCl}
R_2&\leq& I(U;Y_3|X_3)-\delta(\epsilon/2).
\end{IEEEeqnarray}

\subsubsection{Decoding}
Receiver  2 applies backward decoding to decode $u_b^n$ and $x_{3,b}^n$,  for $b\in[1:B]$. Specifically, after $(B+1)$-block transmission, assuming Receiver 2 already decodes $m_{2,b}$ based on $y^n_{2,b+1}$, it looks for a message  $\hat{m}_{2,b-1}$ such that 
\begin{IEEEeqnarray}{rCl}
\big(u_b^n({m}_{2,b}|\hat{m}_{2,b-1}),x_{3,b}^n(\hat{m}_{2,b-1}),y_{2,b}^n
\big) \in\mathcal{T}^n_{\epsilon}(p_{UX_3Y_2}).\nonumber
\end{IEEEeqnarray}
By the independence of the codebooks, the Markov lemma, packing lemma and the induction on backward decoding,  the decoding is successful with high probability if
\begin{IEEEeqnarray}{rCl}\label{}
R_2&\leq& I(U,X_3;Y_2)-\delta(\epsilon)
\end{IEEEeqnarray}

Receiver  1 applies backward decoding to decode $u_b^n,x_{3,b}^n$ and $x_b^n$, for $b\in[1:B+1]$. Specifically, after $(B+1)$-block transmission, assuming Receiver 1 already decodes ${m}_{2,b}$ based on $y^n_{1,b+1}$, it looks for a pair of messages $(\hat{m}_{1,b},\hat{m}_{2,b-1})$ such that 
\begin{IEEEeqnarray}{rCl}
&&\big(u_b^n({m}_{2,b}|\hat{m}_{2,b-1}),x_{3,b}^n(\hat{m}_{2,b-1}),\nonumber\\&&{x}_{b}^n(\hat{m}_{1,b}|m_{2,b},\hat{m}_{2,b-1}),y_{1,b}^n
\big) \in\mathcal{T}^n_{\epsilon}(p_{UX_3XY_1}).\nonumber
\end{IEEEeqnarray}

By the independence of the codebooks, the Markov lemma, packing lemma and the induction on backward decoding,  the decoding is successful with high probability if
\begin{IEEEeqnarray}{rCl}\label{}
R_1&\leq& I(X;Y_1|U,X_3)-\delta(\epsilon)\\
R_1+R_2&\leq& I(X,X_3;Y_1)-\delta(\epsilon).
\end{IEEEeqnarray}

Combining (\ref{eq:th1.1}--\ref{eq:th1.9}) and letting $\epsilon\to 0$, we obtain the inner bounds as below
\begin{subequations}
\begin{IEEEeqnarray}{rCl}
R_2&\leq &I(U,X_3;Y_2),\label{eq:typeII1}\\
R_2&\leq& I(U;Y_3|X_3),\\
R_1&\leq& I(X;Y_1|U,X_3),\label{eq:typeII3}\\
R_1+R_2&\leq& I(X,X_3;Y_1)\label{eq:typeII3.1}
\end{IEEEeqnarray}
\end{subequations}
for some pmf $p(x_3)p(u|x_3)p(x|u)$.
Notice that from \eqref{eq:typeII1} and \eqref{eq:typeII3}, we have
\begin{IEEEeqnarray}{rCl} \label{eq:typeII4}
R_1+R_2&\leq& I(U,X_3;Y_2)+I(X;Y_1|U,X_3)\\
&\stackrel{(a)}\leq& I(U,X_3;Y_1)+I(X;Y_1|U,X_3)\\
&= & I(U,X,X_3;Y_1)\\
&\stackrel{(b)}=& I(X,X_3;Y_1)
\end{IEEEeqnarray}
where (a) holds by Markov chain $(U,X_3)-Y_1-Y_2$ and processing inequality; (b) follows from Markov chain $U-(X,X_3)-Y_1$. Thus, rate constraint \eqref{eq:typeII3.1} is invalid given \eqref{eq:typeII1} and \eqref{eq:typeII3}, which leads to the  inner bounds as shown in Theorem \ref{Them:type2}.

\section{Outer Bounds for PDRBC}\label{sec:OutTDM} 

\subsection{Outer bound for Type-I PDRBC}\label{sec:OutType1}
Define 
\begin{IEEEeqnarray}{rCl}\label{eq:defin1}
U_i=(M_2,Y^{i-1}_1,Y_2^{i-1}),V_i=(Y_1^{i-1},Y^{i-1}_2)\quad 
\end{IEEEeqnarray}
and let $\epsilon_n$ tend to 0 as $n\to\infty$. Introduce   a time-sharing random variable $Q$ that is uniformly distributed over $[1:n]$ and  independent of $(M_1, M_2, U^n,V^n,X^n, X_3^n,Y_1^n, Y^n_2,Y_3^n)$.

By Fano's inequality  we have 
\begin{IEEEeqnarray}{rCl}
nR_1
&\leq& I(M_1;Y^n_1,Y^n_2|M_2)+n\epsilon_n\nonumber\\
%&&= \sum^n_{i=1}
% I(M_1;Y_{1,i},Y_{2,i}|M_2,Y_1^{i-1},Y_2^{i-1})+n\epsilon_n\nonumber\\
 &\stackrel{(a)}=&\sum^n_{i=1} I(X_i,M_1;Y_{1,i},Y_{2,i}|U_i,V_i)+n\epsilon_n\nonumber\\
 &\leq&\sum^n_{i=1} I(X_i,X_{3,i},M_1;Y_{1,i},Y_{2,i}|U_i,V_i)+n\epsilon_n\nonumber\\
 & \stackrel{(b)}= &\sum^n_{i=1} I(X_i,X_{3,i};Y_{1,i},Y_{2,i}|U_i,V_i)+n\epsilon_n\nonumber\\
 &=&nI(X_Q,X_{3,Q};Y_{1,Q},Y_{2,Q}|U_Q,V_Q,Q)+n\epsilon_n\quad
\end{IEEEeqnarray}
where (a) holds by the definition of $U_i,V_i$ and since $X_i$ is a function of $M_1$ and $M_2$; (b) follows from Markov chain $(U_i,V_i,M_1)-(X_i,X_{3,i})-(Y_{1,i},Y_{2,i})$. 

Similarly, 
 \begin{IEEEeqnarray}{rCl}\label{outertypeIstart}
&&n(R_1+R_2)\nonumber\\
%&&=H(M_1,M_2)\nonumber\\
%&&=I(M_1,M_2;Y^n_1,Y^n_2,Y_3^n)+H(M_1,M_2|Y^n_1,Y^n_2,Y_3^n)\nonumber\\
&&\leq I(M_1,M_2;Y^n_1,Y^n_2,Y_3^n)+n\epsilon_n\nonumber\\
&&\stackrel{(a)}=\sum^n_{i=1}I(M_1,M_2;Y_{1,i},Y_{2,i},Y_{3,i}|V_i,Y^{i-1}_3)+n\epsilon_n\nonumber\\
&& \stackrel{(b)}= H(Y_{1,i},Y_{2,i},Y_{3,i}|V_i,X_{3,i},Y_3^{i-1})\nonumber\\
&&\quad-H(Y_{1,i},Y_{2,i},Y_{3,i}|X_i,X_{3,i},V_i)+n\epsilon_n\nonumber\\
&&\leq \sum^n_{i=1}I(X_i;Y_{1,i},Y_{2,i},Y_{3,i}|X_{3,i},V_i)+n\epsilon_n\nonumber\\
&&\stackrel{(c)}=  \sum^n_{i=1}I(X_i;Y_{3,i}|X_{3,i},V_i)+n\epsilon_n\nonumber\\
&& =nI(X_Q;Y_{3,Q}|X_{3,Q},V_Q,Q)+n\epsilon_n
\end{IEEEeqnarray}
where   (a) follows by the definition of $V_i$;  (b) holds  because $X_{3,i}$ is a function of $Y_3^{i-1}$, $X_i$ is a function of $M_1$ and $M_2$ and $(M_1,M_2,V_i,Y^{i-1}_3)-(X_i,X_{3,i})-(Y_{1,i},Y_{2,i},Y_{3,i})$ forms Markov chain; (c) follows from the property of Type-I PDRBC, which has  Markov chain $X_i-(X_{3,i},Y_{3,i},V_i)-(Y_{1,i},Y_{2,i})$.

And, 
\begin{IEEEeqnarray}{rCl}
nR_2&\leq& I(M_2;Y^n_2)+n\epsilon_n\nonumber\\
&=&\sum^n_{i=1}I(M_2;Y_{2,i}|Y^{i-1}_2)+n\epsilon_n\nonumber\\
&\leq& \sum^n_{i=1}I(M_2,Y^{i-1}_2;Y_{2,i})+n\epsilon_n\nonumber\\
&\leq & \sum^n_{i=1}I(M_2,Y^{i-1}_2,Y_1^{i-1};Y_{2,i})+n\epsilon_n\nonumber\\
& =& \sum^n_{i=1}I(U_i,V_i;Y_{2,i})+n\epsilon_n\nonumber\\
&=& nI(U_Q,V_Q;Y_{2,Q}|Q)+n\epsilon_n\nonumber\\
&\leq&nI(U_Q,V_Q,Q;Y_{2,Q})+n\epsilon_n
\end{IEEEeqnarray}

Also,
  \begin{IEEEeqnarray}{rCl}\label{outertypeIend}
&&nR_1\nonumber\\
&&\leq I(M_1;Y^n_1,Y^n_2,Y_3^n|M_2)+n\epsilon_n\nonumber\\
%&&=\sum^n_{i=1}I(M_1;Y_{1,i},Y_{2,i},Y_{3,i}|M_2,Y^{i-1}_1,Y^{i-1}_2,Y^{i-1}_3)+n\epsilon_n\nonumber\\
&&\stackrel{(a)}=\sum^n_{i=1}I(X_i,M_1;Y_{1,i},Y_{2,i},Y_{3,i}|U_i,V_i,X_{3,i},Y_3^{i-1})+n\epsilon_n\nonumber\\
%&&\stackrel{(b)}\leq \sum^n_{i=1}I(X_i,M_1;Y_{1,i},Y_{2,i},Y_{3,i}|U_i,V_i,X_{3,i})+n\epsilon_n\nonumber\\
%&& =H(Y_{1,i},Y_{2,i},Y_{3,i}|U_i,V_i,X_{3,i},Y_3^{i-1})\nonumber\\
%&&\quad-H(Y_{1,i},Y_{2,i},Y_{3,i}|X_i,M_1,U_i,V_i,X_{3,i},Y_3^{i-1})+n\epsilon_n\nonumber\\
%&&\stackrel{(b)}\leq H(Y_{2,i},Y_{3,i}|U_i,V_i,X_{3,i})-H(Y_{2,i},Y_{3,i}|X_i,V_i,X_{3,i})+n\epsilon_n\nonumber\\
&&\leq \sum^n_{i=1}I(X_i;Y_{1,i},Y_{2,i},Y_{3,i}|U_i,V_i,X_{3,i})+n\epsilon_n\nonumber\\
&& \stackrel{(b)}= \sum^n_{i=1}I(X_i;Y_{3,i}|U_i,V_i,X_{3,i})+n\epsilon_n\epsilon_n\nonumber\\
&& =nI(X_Q;Y_{3,Q}|U_Q,V_Q,X_{3,Q},Q)+n\epsilon_n
\end{IEEEeqnarray}
where (a) holds by the definition of $(U_i,V_i)$ and since $X_{3,i}$ is a function of $Y_{3}^{i-1}$;   (b) holds due to holds due to the property of Type-I PDRBC, which has Markov chain $X_i-(U_i,V_i,X_{3,i},Y_{3,i})-(Y_{1,i},Y_{2,i})$.

 Define 
$U=(Q,U_Q),V=(Q,V_Q), X=X_Q,X_3=X_{3Q},Y_1=X_{1Q},Y_2=Y_{2Q}$ and $Y_3=Y_{3Q}$.  
By the definition of $(U_i,V_i)$ in \eqref{eq:defin1}, we have  $V=(Q,V_Q)=(Q,Y^{Q-1}_1,Y_2^{Q-1})$ and $U=(Q,U_Q)=(Q,M_2,Y^{Q-1}_1,Y_2^{Q-1})$, and thus Markov chains
$X_3-U-V$ and $V-(X_3,U)-X$ hold, leading to  pmf $p(u,v)p(x_3|u,v)p(x|x_3,u)$ for this outer bound.

Since  $\epsilon_n$ tends to 0 as $n\to \infty$, combing (\ref{outertypeIstart}--\ref{outertypeIend}) we obtain an outer bound as shown in Theorem \ref{eq:regionThm1}. 

\subsection{Outer bound  for  Type-II PDRBC}\label{sec:OutType2}
Define 
\begin{IEEEeqnarray}{rCl}\label{eq:defin2}
U_i=(M_2,Y^{i-1}_1,Y_2^{i-1},Y_3^{i-1}). 
\end{IEEEeqnarray}
Introduce   a time-sharing random variable $Q$ that is uniformly distributed over $[1:n]$ and  independent of $(M_1, M_2, U^n,X^n, X_3^n,Y_1^n, Y^n_2,Y_3^n)$.

By Fano's inequality  we have 
 \begin{IEEEeqnarray}{rCl}\label{rCl}\label{outertype2start}
nR_1
&\leq& I(M_1;Y^n_1,Y^n_2,Y_3^n|M_2)+n\epsilon_n\nonumber\\
%&&=\sum^n_{i=1}I(M_1;Y_{1,i},Y_{2,i},Y_{3,i}|M_2,Y^{i-1}_1,Y^{i-1}_2,Y^{i-1}_3)+n\epsilon_n\nonumber\\
&\stackrel{(a)}=&\sum^n_{i=1}I(X_i,M_1;Y_{1,i},Y_{2,i},Y_{3,i}|U_i,X_{3,i},)+n\epsilon_n\nonumber\\
& \stackrel{(b)}=& \sum^n_{i=1}I(X_i;Y_{1,i},Y_{2,i},Y_{3,i}|U_i,X_{3,i})+n\epsilon_n\nonumber\\
& \stackrel{(c)}=& \sum^n_{i=1}I(X_i;Y_{1,i}|U_i,X_{3,i})+n\epsilon_n\nonumber\\
&=&nI(X_Q;Y_{1,Q}|U_Q,X_{3,Q},Q)+n\epsilon_n
\end{IEEEeqnarray}
where (a) holds because $X_{3,i}$ is a function of $Y_{3}^{i-1}$ and $X_i$ is a function of $(M_1,M_2)$;  (b) follows from the Markov chain $M_1-(U_i,X_i,X_{3,i})-(Y_{1,i},Y_{2,i},Y_{3,i})$; (c) holds due to the property of Type-II PDRBC, which has Markov chain $X_i-(U_i,X_{3,i},Y_{1,i})-(Y_{2,i},Y_{3,i})$.

And, 
\begin{IEEEeqnarray}{rCl}\label{rCl}
nR_2&\leq& I(M_2;Y^n_2)+n\epsilon_n\nonumber\\
&=&\sum^n_{i=1}I(M_2;Y_{2,i}|Y^{i-1}_2)+n\epsilon_n\nonumber\\
&\leq& \sum^n_{i=1}I(M_2,Y_1^{i-1},Y_2^{i-1},Y_3^{i-1};Y_{2,i})+n\epsilon_n\nonumber\\
& =&  \sum^n_{i=1}I(U_i,X_{3,i};Y_{2,i})+n\epsilon_n\nonumber\\
&=&    nI(U_Q,X_{3,Q};Y_{2,Q}|Q)+n\epsilon_n\nonumber\\
&\leq& nI(U_Q,X_{3,Q},Q;Y_{2,Q})+n\epsilon_n
\end{IEEEeqnarray}

Also,
\begin{IEEEeqnarray}{rCl}\label{outertype2end}
nR_2
&\leq& I(M_2;Y^n_2,Y^n_3)+n\epsilon_n\nonumber\\
%&&= \sum^n_{i=1}
% I(M_2;Y_{2,i},Y_{3,i}|Y_2^{i-1},Y_3^{i-1})+n\epsilon_n\nonumber\\
 &\stackrel{(a)}=& \sum^n_{i=1}
 I(M_2;Y_{2,i},Y_{3,i}|Y_2^{i-1},Y_3^{i-1},X_{3,i})+n\epsilon_n\nonumber\\
 &\leq& \sum^n_{i=1}I(M_2,Y_1^{i-1},Y_2^{i-1},Y_3^{i-1};Y_{2,i},Y_{3,i}|X_{3,i})+n\epsilon_n\nonumber\\
 &\stackrel{(b)}= &\sum^n_{i=1} I(U_i;Y_{2,i},Y_{3,i}|X_{3,i})+n\epsilon_n\nonumber\\
 & \stackrel{(c)}=& \sum^n_{i=1} I(U_i;Y_{3,i}|X_{3,i})+n\epsilon_n\nonumber\\
 & =& nI(U_Q;Y_{3,Q}|X_{3,Q},Q)+n\epsilon_n\nonumber\\
 &\leq&   nI(U_Q,Q;Y_{3,Q}|X_{3,Q})+n\epsilon_n
\end{IEEEeqnarray}
where (a) holds  since $X_{3,i}$ is a function of $Y_3^{i-1}$; (b) holds by the definition of $U_i$; (c) holds due to the property of Type-II PDRBC, which has Markov  chain $U_i-(X_{3,i},Y_{3,i})-Y_{2,i}$.

 Define 
$U=(Q,U_Q), X=X_Q,X_3=X_{3Q},Y_1=X_{1Q},Y_2=Y_{2Q}$ and $Y_3=(Q,Y_{3Q})$. By the definition of $U_i$ in \eqref{eq:defin2}, we have  $U=(Q,U_Q)=(Q,M_2,Y^{Q-1}_1,Y_2^{Q-1},Y_3^{Q-1})$, and since $X_3=X_{3,Q}$ is a function of $Y_3^{Q-1}$, we have Markov chain $X_3-U-X$ leading to  pmf $p(u,x_3)p(x|u)$ for this outer bound.

Since  $\epsilon_n$ tends to 0 as $n\to \infty$, combing (\ref{outertype2start}--\ref{outertype2end}) we obtain an outer bound same as the inner bound of Theorem \ref{Them:type2}.

\subsection{Outer bound for Type-III PDRBC}\label{sec:OutType3}
Define 
\begin{IEEEeqnarray}{rCl}\label{eq:defin3}
U_i=(M_2,Y_2^{i-1},Y_3^{i-1}). 
\end{IEEEeqnarray}
Introduce   a time-sharing random variable $Q$ that is uniformly distributed over $[1:n]$ and  independent of $(M_1, M_2, X^n, X_3^n,Y_1^n, Y^n_2,Y_3^n)$.

 By Fano's inequality  we have
  \begin{IEEEeqnarray}{rCl}\label{outertype3end}
&&nR_1\nonumber\\
&&\leq I(M_1;Y^n_1)+n\epsilon_n\nonumber\\
%&&=\sum^n_{i=1}I(M_1;Y_{1,i},Y_{2,i},Y_{3,i}|M_2,Y^{i-1}_1,Y^{i-1}_2,Y^{i-1}_3)+n\epsilon_n\nonumber\\
&&\stackrel{(a)}=\sum^n_{i=1}I(X_i,M_1;Y_{1,i},Y_{2,i},Y_{3,i}|U_i,X_{3,i})+n\epsilon_n\nonumber\\
&& \stackrel{(b)}= \sum^n_{i=1}I(X_i;Y_{1,i},Y_{2,i},Y_{3,i}|U_i,X_{3,i})+n\epsilon_n\nonumber\\
&& \stackrel{(c)}= \sum^n_{i=1}I(X_i;Y_{1,i}|U_i,X_{3,i})+n\epsilon_n\nonumber\\
&&= nI(X_Q;Y_{1,Q}|U_Q,X_{3,Q},Q)+n\epsilon_n
\end{IEEEeqnarray}
where (a) holds since $X_{3,i}$ is a function of $Y_{3}^{i-1}$ and $X_i$ is a function of $(M_1,M_2)$;  (b) follows from the Markov chain $M_1-(U_i,X_i,X_{3,i})-(Y_{1,i},Y_{2,i},Y_{3,i})$; (c) holds due to the property of Type-III PDRBC, which has Markov chain $X_i-(U_i,X_{3,i},Y_{1,i})-(Y_{2,i},Y_{3,i})$.

Also,
 \begin{IEEEeqnarray}{rCl}\label{outertype3start}
nR_2 
&\leq& I(M_2;Y^n_2,Y^n_3)+n\epsilon_n\nonumber\\
&=& 
 I(M_2;Y_{2,i},Y_{3,i}|Y_2^{i-1},Y_3^{i-1})+n\epsilon_n\nonumber\\
 &=& I(M_2;Y_{2,i},Y_{3,i}|Y_2^{i-1},Y_3^{i-1},X_{3,i})+n\epsilon_n\nonumber\\
 &\stackrel{(a)}\leq& I(U_i;Y_{2,i},Y_{3,i}|X_{3,i})+n\epsilon_n\nonumber\\
 & \stackrel{(b)}=&  I(U_i;Y_{2,i}|X_{3,i})+n\epsilon_n\nonumber\\
 &\leq &  I(U_Q,Q;Y_{2,Q}|X_{3,Q})+n\epsilon_n
\end{IEEEeqnarray}
where (a) holds by the definition of $U_i$; (b) holds due to the property of Type-III PDRBC, which has Markov  chain $U_i-(X_{3,i},Y_{2,i})-Y_{3,i}$.

 Define 
$U=(Q,U_Q), X=X_Q,X_3=X_{3Q},Y_1=X_{1Q},Y_2=Y_{2Q}$ and $Y_3=(Q,Y_{3Q})$. By the definition of $U_i$ in \eqref{eq:defin3}, we have  $U=(Q,U_Q)=(Q,M_2,Y_2^{Q-1},Y_3^{Q-1})$, and since $X_3=X_{3,Q}$ is a function of $Y_3^{Q-1}$, we have Markov chain $X_3-U-X$, leading to  pmf $p(u,x_3)p(x|u)$ for this outer bound.

Since  $\epsilon_n$ tends to 0 as $n\to \infty$, combing (\ref{outertype3start}--\ref{outertype3end}), we obtain the outer bound as below.
\begin{subequations}
\begin{IEEEeqnarray*}{rCl}
R_2&\leq& I(U;Y_{2}|X_3)=\sum_{x_3\in\set{X}_3} I(U;Y_{2}|X_3=x_3)\\
R_1&\leq& I(X;Y_1|U,X_3)=\sum_{x_3\in\set{X}_3} I(X;Y_1|U,X_3=x_3)
\end{IEEEeqnarray*}
\end{subequations}
for some pmf $p(u,x_3)p(x|u)$.
Note that $I(U;Y_{2}|X_3=x_3)$ and $I(X;Y_1|U,X_3=x_3)$ both are linear functions of $p(x,u)$, and since $X_3$ is simplex, the boundary points on the outer bound
\[\sum_{x_3\in\set{X}_3} \big(I(U;Y_{2}|X_3=x_3)+I(X;Y_1|U,X_3=x_3)\big)\]
 is maximized at an extreme point. Thus, the corresponding outer bound  can be  characterized   as
 \begin{subequations}
\begin{IEEEeqnarray*}{rCl}
R_2&\leq& I(U;Y_{2}|X_3)= I(U;Y_{2}|X_3=x_3)\\
R_1&\leq& I(X;Y_1|U,X_3)= I(X;Y_1|U,X_3=x_3)
\end{IEEEeqnarray*}
\end{subequations}
for some value $x_3\in\set{X}_3$ and pmf $p(x,u)$, which completes the converse.

  \section{Proof of Theorem \ref{Them:Gaussian}}\label{Sec:proofGaussian}
  A rigorous proof that our  results in Theorem   \ref{Them:type1}, \ref{Them:type2} and \ref{Them:type3} hold also for the Gaussian PDRBC  is omitted for brevity. 
  In the following subsections, we will first prove the achievability of rate region \eqref{region:T1}, \eqref{region:T2} and \eqref{region:T3}, and then show that these inner bounds are tight.
\subsection{Capacity region on  Type-I Gaussian  PDRBC}\label{Proof:GaussianT1}
 From the inner bound    of Theorem \ref{Them:type1}, we obtain a potentially smaller inner bound:
\begin{subequations}\label{GaussianEvaluate}
  \begin{IEEEeqnarray}{rCl}
R_1&\leq& I(X;Y_3|U,X_3), \\
R_1&\leq& I(X,X_3;Y_1|U,V),\\
R_2&\leq& I(U,V;Y_2),\\
R_1+R_2&\leq& I(X;Y_3|X_3,V)
\end{IEEEeqnarray}
\end{subequations}
for some probability density function (pdf) $p(u,v)p(x_3|v)p(x|u,x_3)$. 

Now let $V\sim\set{N}(0,\bar{\theta}_rP_r)$, $W_0\sim\set{N}(0,{\theta}_rP_r)$, $W_1\sim\set{N}(0,\bar{\beta}\bar{\theta}P)$, $W_2\sim\set{N}(0,\bar{\alpha}{\theta}P)$ and 
\begin{IEEEeqnarray*}{rCl}
%V&\sim&\set{N}(0,\bar{\theta}_rP_r)\\
U&=&\rho_1 V+W_1,~X_3=V+W_0,~X=U+\rho_2 X_3+W_2
\end{IEEEeqnarray*}
where $(U,V,W_0,W_1,W_2)$ are  auxiliary random variables,  independent of each other, and 
\begin{IEEEeqnarray*}{rCl}
\rho_2= \sqrt{\frac{\alpha\theta P}{\theta_r P_r}}, \quad \rho_1+\rho_2=\sqrt{\frac{\beta\bar{\theta} P}{\bar{\theta_r} P_r}}
\end{IEEEeqnarray*}
with $0\leq\alpha,\beta,\theta,\theta_r\leq1$.

With the choice above, we have
%\begin{subequations}
\begin{IEEEeqnarray*}{rCl}
&&I(X;Y_3|U,X_3)= C\left(\bar{\alpha}\theta\frac{P}{\sigma_3^2}\right),\\
&&I(X,X_3;Y_1|U,V)= C\left( \frac{\theta P+\theta_r P_r+2\sqrt{\alpha\theta \theta_r  PP_r}}{\sigma_1^2} \right),\\
&&I(U,V;Y_2)=C\left(\frac{\bar{\theta}P+\bar{\theta}_rP_r+2\sqrt{\beta\bar{\theta}\bar{\theta}_rPP_r}}{\theta P+\theta_r P_r+2\sqrt{\alpha\theta \theta_r  PP_r}+\sigma_2^2}\right),\\
&& I(X;Y_3|X_3,V) = C\left(\frac{\bar{\beta}\bar{\theta}P+\bar{\alpha}\theta  P}{\sigma_3^2}\right).
\end{IEEEeqnarray*}
%\end{subequations}
Thus, the inner bound of Theorem  \ref{Them:type1} for the Gaussian case consists of all rate pairs $(R_1,R_2)$ satisfying
\begin{subequations}\label{eq:type1in}
\begin{IEEEeqnarray}{rCl}
R_2&\leq& C\left(\frac{\bar{\theta}P+\bar{\theta}_rP_r+2\sqrt{\beta\bar{\theta}\bar{\theta}_rPP_r}}{\theta P+\theta_r P_r+2\sqrt{\alpha\theta \theta_r  PP_r}+\sigma_2^2}\right),\quad\\
R_1&\leq& \min \left \{C\left(\bar{\alpha}\theta\frac{P}{\sigma_3^2}\right), \right. \nonumber\\&&\left. \quad C\left( \frac{\theta P+\theta_r P_r+2\sqrt{\alpha\theta \theta_r  PP_r}}{\sigma_1^2} \right)\right \},\\
R_1&+&R_2 \leq C\left(\frac{\bar{\beta}\bar{\theta}P+\bar{\alpha}\theta  P}{\sigma_3^2}\right)
\end{IEEEeqnarray}
\end{subequations}
with $0\leq\alpha,\beta,\theta,\theta_r\leq1$. 

Denote the  inner bound  in \eqref{eq:type1in} as $\set{C}_{\text{inner,Wu}}$, and compare it with Bhaskaran's inner bound $\set{C}_{\text{inner,Bha}}$ in \cite[Theorem 2]{Bha'08}, we find that both inner bounds have same rate expression, except that $\set{C}_{\text{inner,Bha}}$ has an additional rate constraint $R_2\leq C\left(\frac{\bar{\beta}\bar{\theta}P}{\sigma^2}\right)$. Thus, we have  $$\set{C}_{\text{inner,Bha}}\subseteq\set{C}_{\text{inner,Wu}}.$$ 

In  \cite{Bha'08} it shows that $\set{C}_{\text{{inner,Bha}}}$ is equivalent to 
\begin{subequations}
\begin{IEEEeqnarray*}{rCl}
R_2&\leq& C\left( \frac{\bar{\delta}[P+P_r+2\sqrt{\gamma PP_r}]}{{\delta}[P+P_r+2\sqrt{\gamma PP_r}]+\sigma_2^2} \right), \\
R_1&\leq& C\left(\frac{{\delta}[P+P_r+2\sqrt{\gamma PP_r}]}{\sigma_1^2} \right),\\
R_1+R_2 &\leq& C\left(\frac{\bar{\gamma} P}{\sigma_3^2}\right), \quad 0\leq\gamma,\delta\leq 1,
\end{IEEEeqnarray*}
\end{subequations}
which  matches the outer bound of the capacity region for the Type-I Gaussian PDRBC, i.e., $\set{C}_{\text{inner,Bha}}=\set{C}_{\text{PD}}$. Thus, we conclude that $\set{C}_{\text{inner,Wu}}$ is tight for the Type-I Gaussian PDRBC.

\subsection{Capacity region on Type-II Gaussian PDRBC}\label{sec:type2Gaussian}

%\subsubsection
\begin{itemize}
\item[1)] \textit{Proof of the Achievability} \\
Let 
\begin{IEEEeqnarray*}{rCl}
U=\rho_1 X_3+W_1, \quad X=\rho_2 U+W_2
\end{IEEEeqnarray*}
where $(X_3,W_1,W_2)$ are independent with each other and $X_3\sim\set{N}(0,P_r)$, $W_1\sim\set{N}(0,\beta P)$, $(\rho_2)^2U\sim\set{N}(0,\bar{\alpha}P)$, $W_2\sim\set{N}(0,\alpha P)$, with $0\leq \rho_1,\rho_2,\alpha,\beta\leq 1$. 
With the choice above, we obtain
\begin{subequations}
\begin{IEEEeqnarray*}{rCl}
R_2&\leq& I(U,X_3;Y_2)=C\left(\frac{\bar{\alpha}P+P_r}{\alpha P+\sigma^2_2} \right) ,\\
R_2&\leq&I(U;Y_3|X_3)= C\left(\frac{(\beta-{\alpha})P}{\alpha P+\sigma^2_3} \right),\\
R_1&\leq&I(X;Y_1|U,X_3)= C\left(\frac{\alpha P}{\sigma_1^2} \right)
\end{IEEEeqnarray*}
\end{subequations}
where $0\leq \rho_1,\rho_2,\alpha,\beta\leq 1$ and $\beta\geq{\alpha}$. 
\item[2)] \textit{Proof of the Converse}\\
Consider 
\begin{IEEEeqnarray*}{rCl}
I(U;Y_3|X_3) = h(Y_3|X_3)-h(Y_3|X_3,U)
\end{IEEEeqnarray*}
Since
\begin{equation*}
\begin{aligned}
\frac{1}{2}\log \left(2\pi e\sigma_3^2\right)&=h(Z_3)\leq h(Y_3|X_3)\leq h(X+Z_3)\\&\leq \frac{1}{2}\log \left( 2\pi e(P+\sigma_3^2)\right),
\end{aligned}
\end{equation*} 
there must exist a $\beta\in[0,1]$ such that
\[h(Y_3|X_3)=\frac{1}{2}\log \left( 2\pi e(\beta P+\sigma_3^2)\right)\]
Similarly, since 
\begin{equation*}
\begin{aligned}
  \frac{1}{2}\log \left(2\pi e\sigma_3^2\right)&= h(Z_3) \leq h(Y_3|U,X_3)\leq h(Y_3|X_3) \\&=\frac{1}{2}\log \left( 2\pi e(\beta P+\sigma_3^2)\right)
\end{aligned}
\end{equation*}
there must exist an ${\alpha}\in[0,\beta]$ such that
\begin{IEEEeqnarray}{rCl}\label{eq:HY3UX3}
 h(Y_3|U,X_3)=\frac{1}{2}\log \left( 2\pi e({\alpha} P+\sigma_3^2)\right).
\end{IEEEeqnarray}
Thus,
\[R_2\leq I(U;Y_3|X_3)= C\left( \frac{(\beta-{\alpha})P}{{\alpha} P+\sigma_3^2}\right). \]
Next consider
\begin{IEEEeqnarray*}{rCl}
&&I(U,X_3;Y_2)=h(Y_2)-h(Y_2|U,X_3)\nonumber\\
&&\quad\leq \frac{1}{2}\log \left( 2\pi e(P+P_r+\sigma_2^2)\right)- h(Y_2|U,X_3).
\end{IEEEeqnarray*}

By \eqref{eq:HY3UX3} and the conditional EPI in \cite{Gamal'book}, we have
\begin{IEEEeqnarray}{rCl}
h(Y_2|U,X_3)&=&h(Y_3+\tilde{Z}_\textnormal{b}|U,X_3)\nonumber\\
&\geq& \frac{1}{2}\log \left( 2^{2h(Y_3|U,X_3)}+2^{2h(\tilde{Z}_\textnormal{b}|U,X_3)}\right)\nonumber\\
&=&  \frac{1}{2}\log \left( 2\pi e ({\alpha}P+\sigma^2_3) + 2\pi e({\sigma^2_2}-{\sigma^2_3}) \right)\nonumber\\
&=& \frac{1}{2}\log \left( 2\pi e ({\alpha}P+\sigma^2_2) \right).\nonumber
\end{IEEEeqnarray}
Thus,
\[R_2\leq I(U,X_3;Y_2)\leq C\left( \frac{\bar{\alpha} P+P_r}{{\alpha}P+\sigma^2_2}\right). \]
Now consider 
\begin{equation*}
\begin{aligned}
I(X;Y_1|U,X_3)&=h(Y_1|U,X_3)-h(Z_1)\\&=h(Y_1|U,X_3)-\frac{1}{2}\log \left(2\pi\sigma_1^2\right)
\end{aligned}
\end{equation*}
Since
\begin{IEEEeqnarray}{rCl}\label{eq:Y_3UX_3}
&&h(Y_3|U,X_3)\nonumber\\&&\quad=h(Y_1+\hat{Z}_\textnormal{b}|U,X_3)\nonumber\\
&&\quad\geq \frac{1}{2}\log \left( 2^{2h(Y_1|U,X_3)}+2^{2h(\hat{Z}_\textnormal{b}|U,X_3)}\right)\nonumber\\
&&\quad=  \frac{1}{2}\log \left( 2^{2h(Y_1|U,X_3)}+ 2\pi e({\sigma^2_3}-{\sigma_1^2}) \right).
%&=& \frac{1}{2}\log \left( 2\pi e (\bar{\alpha}P+\sigma^2_2) \right).
\end{IEEEeqnarray}

Combining \eqref{eq:HY3UX3} with \eqref{eq:Y_3UX_3}, we obtain
\[2\pi e({\alpha} P+\sigma_3^2)\geq 2^{2h(Y_1|U,X_3)}+ 2\pi e({\sigma^2_3}-{\sigma_1^2}).\]
Thus, 
\[h(Y_1|U,X_3)\leq \frac{1}{2} \log \left(2\pi e (\alpha P+\sigma_1^2)\right),\]
which implies
\[R_1\leq I(X;Y_1|U,X_3)\leq C\left( \frac{\alpha P}{\sigma_1^2}\right).\]
This completes the proof of the converse. % \ref{Them:type2}.
\end{itemize}

\subsection{Capacity region on  Type-III Gaussian  PDRBC}\label{sec:type3Gaussian}
\begin{itemize}
\item[1)] \textit{Proof of the Achievability} \\
The achievability follows by the traditional superposition coding and by shutting down the relay, i.e., set 
 \[X=U+V\]
 where $U\sim\set{N}(0,\bar{\alpha}P)$ and $V\sim\set{N}(0,{\alpha}P)$ are independent of each other. With this choice, it is easy to obtain the rate region in \eqref{region:T2}. 
\item[2)] \textit{Proof of the Converse} \\ %R_2&\leq &  I(U;Y_2|X_3=x_3),\\ R_1&\leq& I(X;Y_1|U,X_3=x_3)
Consider  
\begin{IEEEeqnarray*}{rCl}
&&I(U;Y_2|X_3)\nonumber\\&&\quad=h(Y_2|X_3)-h(Y_2|U,X_3)\nonumber\\&&\quad\leq \frac{1}{2}\log \left( 2\pi e(P+P_r+\sigma_2^2)\right)-h(Y_2|U,X_3).
\end{IEEEeqnarray*}
Since 
\begin{IEEEeqnarray*}{rCl}
\frac{1}{2}\log \left( 2\pi e\sigma_2^2\right) &=& h(Z_2)\leq h(Y_2|U,X_3)\leq  h(Y_2|X_3)\\ &\leq& \frac{1}{2}\log \left( 2\pi e(P+\sigma_2^2)\right), 
\end{IEEEeqnarray*}
there must exist an $\alpha\in[0,1]$ such that
\begin{IEEEeqnarray}{rCl}\label{eq:type3Y2UX3}
h(Y_2|U,X_3)=\frac{1}{2}\log \left( 2\pi e(\alpha P+\sigma_2^2)\right).
\end{IEEEeqnarray}
Thus 
\[R_2\leq I(U;Y_2|X_3)\leq C\left( \frac{\bar{\alpha}P}{\alpha P+\sigma_2^2}\right).\]
Next consider
\begin{equation*}
\begin{aligned}
I(X;Y_1|U,X_3)&=h(Y_1|U,X_3)-h(Y_1|X,X_3) \\&=h(Y_1|U,X_3)- \frac{1}{2}\log (2\pi e \sigma_1^2).
\end{aligned}
\end{equation*}
Using the conditional EPI, we obtain
\begin{IEEEeqnarray}{rCl}\label{eq:type3Y1UX3}
&&h(Y_2|U,X_3) \nonumber\\&&\quad= h(Y_1+\hat{Z}_\textnormal{c}|U,X_3)\nonumber\\
&&\quad\geq \frac{1}{2}\log \left(2^{2h(Y_1|U,X_3)}+2^{2h(\hat{Z}_\textnormal{c}|U,X_3)} \right)\nonumber\\
&&\quad= \frac{1}{2}\log  \left(2^{2h(Y_1|U,X_3)}+2\pi e(\sigma_2^2-\sigma_1^2) \right)
\end{IEEEeqnarray} 
Combining \eqref{eq:type3Y2UX3} and \eqref{eq:type3Y1UX3}, we have
\begin{equation*}
2\pi e(\alpha P+\sigma_2^2)\geq 2^{2h(Y_1|U,X_3)}+ 2\pi e(\sigma_2^2-\sigma_1^2), 
\end{equation*}
which implies 
\begin{equation*}
h(Y_1|U,X_3)\leq \frac{1}{2}\log \left( 2\pi e (\alpha P+\sigma_1^2) \right),
\end{equation*}
and hence
\[R_1\leq I(X;Y_1|U,X_3)\leq C\left( \frac{\alpha P}{\sigma_1^2}\right).\]
This completes the proof of the converse.

\end{itemize}

\section{conclusion}
%We considered the dedicated PDRBC and  introduced three types of physically degraded relay channel.
Based on different degradation orders among the relay and the receivers' observed signals, we introduce three types of  physically    degraded RBCs.  Theorem 1 presents an inner bound and outer bound on the capacity region when one receiver's output is a degraded form of the other receiver's output, and the stronger receiver's output is a degraded form of the relay's output. These bounds coincide in the rate constraints but under different probability mass functions.  Theorem 2 and 3 establish capacity regions for the PDRBC when one receiver's output is a degraded form of the other receiver's output, and the relay's output is a degraded form of either the stronger or weaker receiver's output. Theorem 4 establishes capacity regions for the Gaussian PDRBCs.

\end{document}